\documentclass[11pt,letterpaper]{article}
\usepackage[T1]{fontenc}
\usepackage[utf8]{inputenc}
\usepackage{lmodern}
\usepackage[margin=1in]{geometry}
\usepackage{amsmath,amssymb,amsthm,mathtools}
\usepackage{microtype}
\usepackage{enumitem}
\usepackage{needspace}
\usepackage{booktabs}
\usepackage{tikz}
\usepackage[colorlinks=true,linkcolor=blue,citecolor=blue,urlcolor=blue]{hyperref}
\hypersetup{
  pdftitle={Wyner's Common Information and the Common Entropy Are NP-Hard to Compute},
  pdfsubject={Two polynomial-time reductions and quantitative approximation gaps}
}
\setlist[enumerate]{itemsep=3pt,topsep=5pt}
\setlist[itemize]{itemsep=3pt,topsep=5pt}
\newtheorem{theorem}{Theorem}
\newtheorem{lemma}[theorem]{Lemma}
\newtheorem{corollary}[theorem]{Corollary}
\newtheorem{proposition}[theorem]{Proposition}
\theoremstyle{definition}
\newtheorem{definition}[theorem]{Definition}
\theoremstyle{remark}
\newtheorem{remark}[theorem]{Remark}
\newcommand{\NP}{\mathrm{NP}}
\newcommand{\PT}{\mathrm{P}}
\newcommand{\R}{\mathbb{R}}
\newcommand{\cX}{\mathcal{X}}
\newcommand{\cY}{\mathcal{Y}}
\newcommand{\cW}{\mathcal{W}}
\newcommand{\cS}{\mathcal{S}}
\newcommand{\eps}{\varepsilon}
\newcommand{\qs}{q^{\star}}
\newcommand{\ind}{\mathbf{1}}
\newcommand{\indep}{\perp}
\DeclareMathOperator{\supp}{supp}
\newcommand{\CE}{\textup{\textsc{Common Entropy}}}
\newcommand{\CX}{\textup{\textsc{Cubic X3C}}}

\title{\textbf{Wyner's Common Information and the\\ Common Entropy Are NP-Hard to Compute}}
\author{ Roberto Bruno \  \
Ugo Vaccaro, \\
Department of Computer Science, University of Salerno\\
84084 Fisciano (SA), Italy\\
Email:{\tt \{rbruno,uvaccaro\}@unisa.it}}
\date{}

\begin{document}
\maketitle

\begin{abstract}
Wyner's common information $J(p)$ and the common entropy $G(p)$ of a pair of
random variables $(X,Y)$ with finite ranges and joint distribution $p$ are both
defined by an optimization over the auxiliary random variables $W$ that make $X$
and $Y$ conditionally independent: $J(p)$ is the infimum of the mutual
information $I(X,Y;W)$, and $G(p)$ is the infimum of the entropy $H(W)$. We prove
that both quantities are NP-hard to compute. For Wyner's common information,
deciding whether $J(p)$ is strictly smaller than a given integer $r$ is NP-hard,
and the problem is NP-complete on tables that are uniform on their support and
have the same number of positive entries in every row. The reduction is from
Maximum Edge Biclique and rests on an exact criterion, in terms of rectangles in
the support of $p$, for improving on the auxiliary $W=X$; every yes-instance
satisfies $J(p)<r-1/(4N^2)$ and every no-instance satisfies $J(p)=r$, where $N$
is the number of positive entries of $p$. For the common entropy, deciding
whether $G(p)\le r$ for a rational number $r$ is NP-hard, even for tables with at
most four nonzero entries in each row, all taking one of two values. The
reduction is from exact cover by $3$-sets and rests on a majorization property of
the masses of the nonnegative rank-one decompositions of $p$, combined with a
quantitative form of the Schur concavity of entropy. In both cases the separation
between yes- and no-instances is inverse-polynomial in the size of $p$, so that
approximating $J(p)$ or $G(p)$ within a suitable inverse-polynomial additive error
is also NP-hard. The two mechanisms are genuinely different: we exhibit a table on
which a large rectangle in the support lowers $J$ below the threshold of our first
reduction while leaving $G$ on it.
\end{abstract}

\medskip
\noindent\textbf{Index Terms.} Common information, common entropy, conditional
independence, NP-hardness, nonnegative rank, biclique, exact cover,
majorization.

\section{Introduction}
\label{sec:intro}

How much randomness must two agents share in order to account for the
whole of the statistical dependence between them?  Mutual information
measures the dependence of two random variables $X$ and $Y$, but it does
not answer this question: it quantifies how much the observation of one
variable says about the other, not how much common randomness is needed
in order to \emph{manufacture} the pair.  The quantity that answers the
latter question was introduced by Wyner~\cite{Wyner1975}.  For a joint
distribution $p$ of a pair $(X,Y)$ of finite valued random variables,
Wyner's common information is defined as
\begin{equation}\label{eq:intro-J}
  J(p)\;=\;\inf_{W\,:\,X\perp Y\,\mid\,W} I(X,Y;W),
\end{equation}
the infimum being taken over all auxiliary random variables $W$, jointly
distributed with $(X,Y)$, that render $X$ and $Y$ conditionally
independent.  Measuring the shared randomness by its own entropy, rather
than by the information it carries about the pair, leads to the
\emph{common entropy}
\begin{equation}\label{eq:intro-G}
  G(p)\;=\;\inf_{W\,:\,X\perp Y\,\mid\,W} H(W),
\end{equation}
introduced by Kumar, Li and El Gamal~\cite[Section~II, eq.~(1)]{KumarLiElGamal2014}.
The two infima range over the same auxiliaries, and since
$I(X,Y;W)\le H(W)$ for every $W$, one has $J(p)\le G(p)$.  This paper
studies the computational complexity of the two quantities, and shows
that both are NP-hard to compute.

Wyner gave two operational characterizations of~\eqref{eq:intro-J}.  The
first one is a corner of the rate region of the Gray--Wyner
network~\cite{GrayWyner1974}: $J(p)$ is the smallest rate that must be
carried by the common branch of the network if the total rate is to be
kept at the joint entropy $H(X,Y)$, that is, if no redundancy at all is
to be paid for the possibility of splitting the source between two private
branches.  The second one is a statement about simulation: $J(p)$ is the
smallest rate of common randomness from which two processors that do not
communicate can generate a sequence of pairs whose law is
asymptotically indistinguishable, in normalized relative entropy, from
$p^{\otimes n}$.  In this second guise common information is a precursor
of a large body of work on the approximation of output
statistics~\cite{HanVerdu1993}, on channel synthesis and soft
covering~\cite{Cuff2013}, and on the exact, rather than approximate,
generation of correlated sources~\cite{KumarLiElGamal2014}.  The last
setting is the one in which the common entropy arises: the exact common
information rate, that is, the amount of common randomness per symbol
needed to generate $(X,Y)$ exactly in a distributed manner, is the limit
of $G(p^{\otimes n})/n$ as $n\to\infty$, and it is never smaller than
Wyner's common information~\cite[Section~III]{KumarLiElGamal2014}.

Common information occupies a well understood position in the hierarchy
of dependence measures.  Writing $K(X;Y)$ for the common information of
G\'acs and K\"orner~\cite{GacsKorner1973}, that is, for the maximal rate
of randomness that the two agents can extract \emph{without} any
communication, one has
\[
  K(X;Y)\;\le\;I(X;Y)\;\le\;J(p)\;\le\;G(p)\;\le\;\min\{H(X),H(Y)\},
\]
and all the inequalities are typically strict; the two common
informations are the two extremes of the same intuition, one asking what
can be distilled from the correlation and the other what suffices to
create it.  Witsenhausen~\cite{Witsenhausen1976} computed values and
bounds for~\eqref{eq:intro-J} and obtained the variational
characterization that still underlies most of the available lower bound
techniques.  Wyner's cardinality bound $|\mathcal{W}|\le
|\mathcal{X}||\mathcal{Y}|$ turns~\eqref{eq:intro-J} into the
minimization of a continuous function over a compact set, so that the
infimum is attained; what it does not do, and this is the source of all
the difficulties, is convexify the feasible region, the set of
conditionally independent decompositions of $p$ being highly non-convex.
The same feasible region underlies~\eqref{eq:intro-G}, where in addition
the objective $H(W)$ is a concave function of the distribution of $W$,
so that computing $G(p)$ amounts to minimizing a concave function under a
conditional independence constraint~\cite[Section~II]{KumarLiElGamal2014}.

The last two decades have seen common information extended in many
 directions: to R\'enyi divergences, where the optimal rate turns
out to be remarkably robust to the choice of the distance
measure~\cite{YuTan2022}; to continuous alphabets, where a lossy source
coding interpretation is available~\cite{XuLiuChen2016}; to Gaussian
sources and to relaxed versions tailored to
estimation~\cite{SulaGastpar2021}.  It is a standard tool in the theory
of common randomness and secret key generation in
cryptography~\cite{AhlswedeCsiszar1993}, where it bounds what two
parties can securely sample~\cite{PrabhakaranPrabhakaran2014}, and it has
recently migrated into statistics and machine learning, where the
auxiliary variable $W$ of~\eqref{eq:intro-J} is read as a common latent
representation of two views of the same
object~\cite{SulaGastpar2021,HuangElGamal2024}.  Two recent accounts,
the monograph~\cite{YuTan2022} and the survey~\cite{SudanTyagiWatanabe2020},
testify to the breadth of the subject.

The common entropy has found applications of its own. In information
theory, it governs the one-shot version of exact distributed generation:
if two processors that do not communicate must output a pair with
distribution exactly $p$, using a common message encoded by a prefix-free
code, the minimum expected length $R^*$ of that message satisfies
$G(p)\le R^*<G(p)+1$~\cite[Proposition~1]{KumarLiElGamal2014}. Its
normalized limit, the exact common information rate, has since been
studied in depth: Yu and Tan~\cite{YuTan2020} showed that it coincides
with the $\infty$-R\'enyi common information and that it can be strictly
larger than Wyner's common information, settling a question raised
in~\cite{KumarLiElGamal2014}. In statistics, a decomposition of $p$ as a
mixture of product distributions is a latent class model for the pair
$(X,Y)$~\cite{Goodman1974}, and $G(p)$ is the smallest entropy of a latent
class variable that explains the whole dependence between $X$ and $Y$.
Kocaoglu, Shakkottai, Dimakis, Caramanis and
Vishwanath~\cite{KocaogluSDCV2020} have exploited this interpretation in
causal inference. Provided there are no low-entropy mediators, the common
entropy distinguishes direct causation from spurious correlation for
almost all joint distributions on simple causal graphs with two observed
variables; it can also be used to check whether a separating set found by
constraint-based causal discovery algorithms such as PC and FCI is valid,
which improves these algorithms when samples are scarce. A conditional
version of the common entropy has been used by Jiang and
Kocaoglu~\cite{JiangKocaoglu2024} to bound causal effects with
instrumental variables under weak confounding, and to falsify candidate
instruments. In these applications $G(p)$ is computed by iterative
algorithms that come with no guarantee of global
optimality~\cite{KocaogluSDCV2020}; Theorem~\ref{thm:G-main} shows that,
unless $\PT=\NP$, no polynomial-time algorithm can compute it exactly.

\subsection{Common information as a complexity measure}

Outside Shannon theory, Wyner's common information has acquired a second
life, as a lower bound technique in theoretical computer science.  The
point of contact is the nonnegative rank $\operatorname{rank}_+$.  An
\emph{extended formulation} of a polytope $P$ writes $P$ as the linear
projection of a polyhedron $Q$ in a higher-dimensional space; its size
is the number of linear inequalities that define $Q$.  The
\emph{extension complexity} of $P$ is the smallest size of an extended
formulation of $P$.  The \emph{slack matrix} of $P$ has one row for each
facet-defining inequality of $P$ and one column for each vertex of $P$,
and its entry is the slack of that inequality at that vertex; it is a
nonnegative matrix.  Yannakakis' factorization
theorem~\cite{Yannakakis1991} states that the extension complexity of
$P$ equals the nonnegative rank of its slack matrix.  Consequently, to
show that the polytope of a combinatorial optimization problem has no
extended formulation of polynomial size, it is enough to show that its
slack matrix has superpolynomial nonnegative rank.
This programme culminated in the unconditional exponential
lower bounds for the travelling salesman, cut and stable set
polytopes~\cite{Fiorini2015}, and information theory entered it with the
information complexity approach of Braverman and
Moitra~\cite{BravermanMoitra2013}.

Braun and Pokutta~\cite{BraunPokutta2016} isolated the mechanism.
Normalize a nonnegative matrix so that it is the table of a joint
distribution $p$.  Any nonnegative factorization of rank $r$ can be
normalized into a mixture of $r$ product distributions, and the index of
the mixture is an admissible auxiliary variable for both~\eqref{eq:intro-J}
and~\eqref{eq:intro-G}; since $I(X,Y;W)\le H(W)$ and the entropy of an
$r$-valued variable is at most $\log r$, one obtains
\begin{equation}\label{eq:intro-rank}
  2^{\,J(p)}\;\le\;2^{\,G(p)}\;\le\;\operatorname{rank}_+(p).
\end{equation}
Common information is thus an information-theoretic relaxation of the
logarithm of the nonnegative rank, and it is a relaxation with two
decisive advantages over the combinatorial quantity it relaxes: it
degrades gracefully under perturbations of the matrix, and entropic
estimates tensorize.  Exploiting this, Braun and Pokutta computed
essentially exactly the common information of the unique disjointness
matrix, improving on~\cite{BravermanMoitra2013}, and derived from it the
first lower bounds on the average-case and adversarial \emph{approximate}
extension complexity of polytopes, together with an information-theoretic
version of the fooling set method.  Braun, Jain, Lee and
Pokutta~\cite{BraunJainLeePokutta2017} subsequently showed that the
relaxation~\eqref{eq:intro-rank} is exact in the amortized regime:
common information is additive under tensoring and characterizes the
logarithm of the amortized nonnegative rank, that is, of the smallest
nonnegative rank attainable under tensoring and small perturbations;
they also extended Witsenhausen's dual characterization and exhibited
explicit, perturbation-robust dual witnesses.  The common entropy sits
between the two ends of~\eqref{eq:intro-rank}: a nonnegative
factorization of $p$ is charged by the entropy of its mixing weights,
whereas the nonnegative rank counts them
(Section~\ref{subsec:decompositions}).  Closely related uses of
the same circle of ideas occur in the study of the communication cost of
generating a bipartite distribution or a quantum
state~\cite{JainShiWeiZhang2013}, in the communication complexity of
correlation~\cite{Harsha2010}, in the semidefinite analogue of
Yannakakis' theorem~\cite{LeeRaghavendraSteurer2015}, and in the
decidability questions raised by non-interactive simulation of joint
distributions~\cite{GhaziKamathSudan2016}.

In this literature $J(p)$ and $G(p)$ are not information measures attached
to a source but complexity measures attached to a matrix.  It is then
natural to ask how hard they are to compute.

\subsection{The computational status of information measures}

Most of the quantities of single-letter information theory are easy to
evaluate.  Entropy and mutual information are explicit functions of the
table; the G\'acs--K\"orner common information is computed in linear
time, being the entropy of the partition of the support into connected
components; channel capacity and the rate--distortion function are
convex programs.  The combinatorial counterpart of~\eqref{eq:intro-J}
and~\eqref{eq:intro-G}, on the other hand, is hard: deciding the
nonnegative rank of a nonnegative matrix is NP-hard~\cite{Vavasis2009},
and although the problem becomes tractable when the rank is held
fixed~\cite{AroraGeKannanMoitra2012,Moitra2016}, exact nonnegative
factorizations may require irrational entries~\cite{Chistikov2017}, so
that even the format of a certificate is a delicate matter.

For Wyner's common information and for the common entropy the situation
was, until now, unresolved.  The minimization in~\eqref{eq:intro-J} is a
finite-dimensional but non-convex program; the algorithms proposed for it
are local, and come with no guarantee of global
optimality~\cite{HuangElGamal2024}.  Kumar, Li and El Gamal observe that
computing $G(p)$ is in general quite difficult, and that it is closely
related to positive, that is, nonnegative, matrix
factorization~\cite[Sections~II and~V]{KumarLiElGamal2014}.  The
NP-hardness of the nonnegative rank does not settle the question for
either quantity, because~\eqref{eq:intro-rank} is only an inequality: an
auxiliary of small mutual information or of small entropy may take many
values.  Nor is it clear a priori that the problems lie in NP, since the
optimal value of an entropic program is in general an irrational, indeed
transcendental, number, and comparisons of such numbers are not
obviously certifiable.

\subsection{Results of this paper}

We settle the complexity of both quantities.  Both proofs rest on the same
structural fact: an admissible auxiliary $W$ is the same thing as a
decomposition of $p$ into nonnegative matrices of rank one, the
distribution of $W$ being the vector of masses of the components, and
each component is supported on a rectangle contained in the support of $p$
(Lemma~\ref{lem:decomp}).  The two quantities weigh the components
differently, and this leads to two different combinatorial problems.

\emph{Wyner's common information.}  Our first main result,
Theorem~\ref{thm:J-main}, states that deciding whether $J(p)<r$, for a
rational probability table $p$ and an integer $r$ given in binary, is
NP-hard under deterministic polynomial-time many-one reductions.
Hardness does not depend on any numerical fine tuning: it persists on
the class of tables that are uniform on their support, have $n=2^{r}$
rows, exactly $d$ positive entries in every row, every positive entry
equal to $1/(nd)$, and at most $d$ positive entries in every column.  On
this restricted class the decision problem is moreover NP-complete.

The proof rests on an exact combinatorial criterion for strict
improvement over the trivial auxiliary variable $W=X$.  For a table $p$
uniform on a support $E\subseteq\mathcal{X}\times\mathcal{Y}$ with $d$
positive entries per row, Corollary~\ref{cor:criterion} states that
\[
  J(p)<\log n
  \qquad\Longleftrightarrow\qquad
  E\ \text{contains a rectangle } A\times B \text{ with } |A||B|>d,
\]
and that $J(p)=\log n$ exactly when no such rectangle exists.  The
information-theoretic question is thereby reduced to a question about
bicliques of the bipartite graph carrying the support, and NP-hardness
follows by a direct reduction from Maximum Edge Biclique, which is
NP-complete by a theorem of Peeters~\cite{Peeters2003}.

Two features of this criterion should be pointed out.  First, the hard
instances are uniform tables, so that the difficulty is created entirely
by the combinatorics of the support -- precisely the rectangle structure
that governs nonnegative rank and communication complexity lower bounds.
Second, the criterion supplies certificates that are purely
combinatorial, namely a pair of index sets; combined with the choice
$n=2^{r}$, which makes the threshold $\log n=r$ an integer, this allows
the comparison of an entropic quantity with the threshold to be verified
without any real arithmetic, and is what yields membership in NP on the
restricted class in spite of the transcendence phenomena recalled above.

Hardness is finally shown to be numerically robust.  Proposition~\ref{prop:gap}
quantifies the dichotomy: every yes-instance produced by the reduction
satisfies $J(p)<r-1/(4N^{2})$, while every no-instance satisfies
$J(p)=r$, where $N$ is the number of positive entries of the table.  The
gap is inverse polynomial in the input length, and Theorem~\ref{thm:approximation} converts it
into hardness of numerical approximation: producing a rational $q$ with
$|q-J(p)|\le 1/(16N^{2})$ is NP-hard, and consequently, unless
$\mathrm{P}=\mathrm{NP}$, no algorithm can return $2^{-s}$-approximations
of $J(p)$ in time polynomial in the length of the table and in $s$ given
in unary.  The obstruction is already present at the very modest
precision $s=O(\log N)$.

\emph{The common entropy.}  Our second main result,
Theorem~\ref{thm:G-main}, states that deciding whether $G(p)\le r$, for a
rational probability table $p$ and a rational number $r$, is NP-hard; this
remains true for tables with at most four nonzero entries in each row, all
nonzero entries taking one of two values.  Here only the masses of the
components of a decomposition matter, and the decisive combinatorial
question is how much mass a single rectangle can carry.  The reduction is
from exact cover by $3$-sets, in the cubic version shown NP-complete by
Moore and Robson~\cite{MooreRobson2001}.  For the tables we construct, the
mass vector of every rank-one decomposition is majorized by an explicit
two-level probability vector $\qs$, which is attained if and only if the
underlying instance has an exact cover.  By the Schur concavity of
entropy, $G(p)=H(\qs)$ on yes-instances, and a quantitative form of Schur
concavity shows that, on no-instances, $G(p)$ exceeds $H(\qs)$ by an amount
that is inverse-polynomial in the size of $p$.  This gap allows the value
$H(\qs)$, which is in general irrational, to be replaced by a rational
threshold, and, as for Wyner's common information, it makes the
approximation of $G(p)$ within an inverse-polynomial additive error NP-hard
(Corollary~\ref{cor:G-approx}).

\emph{Two different mechanisms.}  Wyner's common information rewards
components that spread their mass uniformly over large rectangles, and is
therefore governed by the largest rectangle in the support; the common
entropy rewards components of large mass, and is governed by how much of
the mass of $p$ can be concentrated on few rectangles.  The difference is
intrinsic: Proposition~\ref{prop:J-vs-G} exhibits a table in the
restricted class of Theorem~\ref{thm:J-main} on which a large rectangle
lowers $J$ below the threshold of our first reduction while $G$ stays on
it.  Thus neither hardness result is a consequence of the other, and two
different reductions are needed.

\subsection{Intractability in Shannon theory}

Results of this kind have a tradition in information theory.  Berlekamp,
McEliece and van Tilborg~\cite{Berlekamp1978} showed that decoding a
general linear code is NP-hard, and Vardy~\cite{Vardy1997} showed that
computing the minimum distance of a code is also NP-hard.
Our paper belongs to the research line on understanding the computational complexity
of optimization problems in Shannon Theory (e.g., \cite{BrunoVaccaroCMEIA, noi, CGV,
ebrahimi2024minimum,
SC, Kov} and references therein quoted).  Among these problems, the
closest relative of the common entropy is the minimum-entropy coupling
problem, another entropy minimization over a set of joint distributions,
which is NP-hard~\cite[Theorem~3.4]{Kov}.
Each of these results marks the boundary beyond which an information theoretic
quantity, defined
by an optimization over an infinite family of objects, ceases to be an
effectively computable function of its data.  The present paper places
Wyner's common information and the common entropy on the hard side of
that boundary.

\subsection{Organization of the paper}

Section~\ref{sec:statement} establishes the notation, the computational
model and the statements of the two main theorems, together with the
correspondence between admissible auxiliaries and nonnegative rank-one
decompositions on which both proofs rest.

Sections~\ref{sec:rectangles}--\ref{sec:gap} concern Wyner's common
information.  Section~\ref{sec:rectangles}
characterizes, for tables that are uniform on their support, when the
trivial auxiliary variable $W=X$ can be strictly improved, and
establishes the rectangle criterion.  Section~\ref{sec:reduction} turns that criterion
into a polynomial-time many-one reduction from Maximum Edge Biclique and
proves Theorem~\ref{thm:J-main}, together with membership in NP on the
restricted class.  Section~\ref{sec:gap} reuses the entropy estimate of Section~\ref{sec:rectangles} to
exhibit an explicit gap between yes-instances and no-instances, and
deduces the hardness of approximating $J(p)$ numerically.

Sections~\ref{sec:x3c}--\ref{sec:G-proof} concern the common entropy.
Section~\ref{sec:x3c} outlines the proof of Theorem~\ref{thm:G-main} and
introduces the restricted exact cover problem from which we reduce.
Section~\ref{sec:matrix} constructs the reduction table and analyses its
rank-one decompositions, Section~\ref{sec:entropy-gap} proves the entropy
gap, and Section~\ref{sec:G-proof} completes the proof and derives the
hardness of approximating $G(p)$.

Section~\ref{sec:comparison} compares the two results and lists some open
problems.

\section{Preliminaries}
\label{sec:statement}

\subsection{Notation}

All logarithms are to base two; $\ln$ denotes the natural logarithm, and we posit
$0\log 0=0$; $[n]=\{1,\dots,n\}$ and $x_+=\max\{x,0\}$. A \emph{probability
vector} is a family $q=(q_i)_{i\in I}$ of nonnegative real numbers, indexed by a
finite set $I$, with $\sum_{i\in I}q_i=1$. Its entropy is
$H(q)=-\sum_{i\in I}q_i\log q_i$; the entropy $H(W)$ of a random variable $W$
with finitely many values is the entropy of its distribution.

Let $p=(p(x,y))_{x\in \mathcal X, y\in \mathcal Y}$ be the joint probability distribution  of random variables
$X, Y$ taking values in the
 finite alphabets $\mathcal X,\mathcal Y$. We see $p=(p(x,y))_{x\in \mathcal X, y\in \mathcal Y}$ as a table, or matrix, where
its rows are indexed by elements $x$ in
$\mathcal X$ and its columns by elements $y$ in $\mathcal Y$.

\subsection{Admissible auxiliaries, common information and common entropy}

An \emph{auxiliary} is a random variable $W$ defined jointly with
$(X,Y)$ so that the  marginal of  $(X,Y)$ remains $p$.
As in~\cite{Wyner1975,KumarLiElGamal2014}, auxiliaries take finitely many values.
$W$ is  \emph{admissible} when
it holds that
\begin{equation}
\Pr(X=x,Y=y\mid W=w)
=\Pr(X=x\mid W=w)\Pr(Y=y\mid W=w)
\label{eq:independence}
\end{equation}
for all $x,y$ and all $w$ of positive probability, that is, when $X$ and $Y$ are
conditionally independent given $W$, written $X\perp Y\mid W$.
Wyner's common information~\cite{Wyner1975} and the common
entropy~\cite{KumarLiElGamal2014} of $p$ are
\begin{equation}
J(p)=\inf_{W:\,X\perp Y\mid W} I(X,Y;W)
\label{eq:wyner}
\end{equation}
and
\begin{equation}\label{eq:G}
  G(p)=\inf_{W:\,X\perp Y\mid W} H(W),
\end{equation}
the infima being taken over all admissible auxiliaries.
We use the  well known identities
\begin{equation}
\begin{split}
I(X,Y;W)&=H(X,Y)-H(X,Y\mid W),\\
H(X,Y\mid W)&=\sum_{w:\,\Pr(W=w)>0}
  \Pr(W=w)H\bigl(p(\cdot,\cdot\mid w)\bigr).
\end{split}
\label{eq:conditional}
\end{equation}
In particular, $W=X$ is admissible.
Thus
\begin{equation}
I(X,Y;X)=H(X),\quad  \mbox { and } \quad J(p)\le H(X).
\label{eq:row-upper}
\end{equation}
Finding a better auxiliary for $J$ therefore amounts to making
$H(X,Y\mid W)$ strictly larger than $H(Y\mid X)$.
Since $I(X,Y;W)=H(W)-H(W\mid X,Y)\le H(W)$ for every auxiliary, and $W=X$ and
$W=Y$ are admissible, we also have
\begin{equation}
J(p)\;\le\;G(p)\;\le\;\min\{H(X),H(Y)\}.
\label{eq:J-le-G}
\end{equation}

\subsection{Computational model and main results}

We use the standard deterministic Turing bit model. A probability table $p$ is
given explicitly, including its zero entries, with rational numerators and
positive denominators encoded in binary. Running time is measured against
the total bit length of the input. Since $J(p)$ and $G(p)$ are in general
irrational, we formulate their computation as threshold problems.

To prove that computing $J(p)$ is
NP-hard, we prove the NP-hardness of an associated decision problem.
More precisely, the decision problem we will study has input $(p,r)$,
where $r$ is a nonnegative integer encoded in binary, and asks whether it holds that
\begin{equation}
J(p)<r.
\label{eq:decision}
\end{equation}
The answer is no when $J(p)=r$.

\begin{theorem}[Hardness of Wyner's common information]
\label{thm:J-main}
Deciding~\eqref{eq:decision} is NP-hard under deterministic polynomial-time
many-one reductions. NP-hardness remains true under all of the following restrictions
on the table $p$:
\begin{enumerate}
\item the number of rows is $n=2^r$, with $n\ge2$;
\item every row has exactly $d\ge1$ positive entries;
\item every positive entry equals $1/(nd)$;
\item every column has at most $d$ positive entries.
\end{enumerate}
Additionally, on this restricted class, the decision problem is NP-complete.
\end{theorem}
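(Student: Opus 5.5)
The plan is to follow the route the introduction announces: prove the rectangle criterion (Corollary~\ref{cor:criterion}), then reduce from Maximum Edge Biclique to the restricted class.

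\emph{Decompositions.} By Lemma~\ref{lem:decomp}, an admissible $W$ is the same as a decomposition $p=\sum_w \pi_w\, a_w b_w^{\top}$ into product distributions, each supported on a rectangle $A_w\times B_w\subseteq E$. Let $p$ be uniform on $E$ with $d$ entries per row and $n$ rows, so $H(X,Y)=\log(nd)$. Then $J(p)<\log n$ is equivalent to $H(X,Y\mid W)>\log d$. Each component satisfies
\[
H\bigl(p(\cdot,\cdot\mid w)\bigr)\le \log\bigl(|A_w||B_w|\bigr).
\]

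\emph{Criterion, no-direction.} Assume every rectangle in $E$ has $|A||B|\le d$. Then $H(X,Y\mid W)\le\log d$ for every admissible $W$, so $J(p)\ge\log n$. Together with~\eqref{eq:row-upper} this gives $J(p)=\log n$.

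\emph{Criterion, yes-direction.} Assume a rectangle $A\times B\subseteq E$ has $|A||B|>d$. Take a small $\varepsilon>0$ and remove from $p$ the component $\varepsilon\cdot\mathrm{Unif}(A\times B)$, which has entropy $\log|A||B|$. Decompose the remainder row by row. Each row $x\in A$ now carries mass $\frac1n-\frac{\varepsilon}{|A|}$, which is not uniform on the row. I expect the main obstacle here: showing that the first-order gain $\varepsilon(\log|A||B|-\log d)$ dominates the $O(\varepsilon^2)$ loss from the perturbed rows. Concretely, for $x\in A$ the residual row puts weight $\frac{1}{nd}-\frac{\varepsilon}{|A||B|}$ on $B$ and $\frac{1}{nd}$ elsewhere. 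A Taylor expansion of $H$ around the uniform distribution on $d$ points should have vanishing first-order term. If that argument is delicate, I would fall back on a cleaner split: use the single component uniform on $A\times B$ with mass $\varepsilon$, and absorb the defect in each row by taking $\varepsilon\le|A||B|/(nd)$, then compute the entropy exactly.

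\emph{Reduction.} Maximum Edge Biclique asks, for a bipartite graph and an integer $k$, whether some biclique has at least $k$ edges (Peeters~\cite{Peeters2003}). I would pad the instance so that every left vertex has the same degree $d=k-1$, right degrees are at most $d$, and the number of left vertices is a power of two $n=2^r$. The padding would add fresh private columns to low-degree rows, and add copies of disjoint gadgets, for instance matchings into fresh columns, to reach $n=2^r$. None of this may create new bicliques with more than $d$ edges. The care needed is twofold.

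\begin{itemize}
\item If $k-1$ is below the maximum degree, the padding must first raise $d$. In that regime trivial stars already give rectangles of size $\max\deg > d$, so the case has to be handled separately.
\item A row-star $\{x\}\times N(x)$ has exactly $d$ edges, so it never exceeds $d$. A column with more than $d$ neighbours would violate condition~4.
\end{itemize}

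\emph{NP membership.} On the restricted class the certificate is a pair $(A,B)$. Checking $A\times B\subseteq E$ and $|A||B|>d$ takes polynomial time, and Corollary~\ref{cor:criterion} together with $\log n=r$ makes this certificate correct and complete.
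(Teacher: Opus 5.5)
Your plan follows the paper's proof. The no-direction is Lemma~\ref{lem:no-rectangle}. Your ``remove $\varepsilon\cdot\mathrm{Unif}(A\times B)$ and split the remainder by rows'' is exactly the auxiliary $W_t$ of Lemma~\ref{lem:large-rectangle}, with $\varepsilon=|A||B|t/N$. Your padding (private columns on short rows, new rows whose neighbours are all private, $d=k-1$, $n=2^r$) and the certificate $(A,B)$ are those of Lemma~\ref{lem:reduction} and of the proof of Theorem~\ref{thm:J-main}.

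Your Taylor argument is a valid substitute for the explicit bound~\eqref{eq:gain-bound}. Each perturbed row is an $O(\varepsilon)$ perturbation of the uniform law on $E_x$, where entropy is stationary on the simplex. The only first-order effects are therefore the lost row mass and the new component, giving $H(X,Y\mid W)=\log d+\varepsilon\log(|A||B|/d)-O(\varepsilon^2)$. The paper proves the explicit inequality instead because it reuses it for the quantitative gap of Proposition~\ref{prop:gap}.

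The step that fails as written is the running time of the reduction. The threshold $k$ is given in binary, and you set $d=k-1$ and pad every row up to degree $d$. A small graph with a huge $k$ then yields a table with exponentially many columns in the input length. You must first test whether $k>e=|E_\Gamma|$. If so, the instance is a no-instance, and you output a fixed no-instance such as $\mathrm{diag}(1/2,1/2)$ with $r=1$. Only after this test is $d<e$ and the padding polynomial; together with the next case, it also guarantees $n\ge 2$.

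In the case where some vertex has degree at least $k$, you should not ``raise $d$'': that changes the threshold and breaks the equivalence. The biclique instance is then already a yes-instance, so map it to a fixed yes-instance, for example the uniform $2\times 2$ table with $r=1$, as the paper does.

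Two smaller points. Each new row needs $d$ private columns, not a single matched column. You should also state why the padding creates no rectangle with more than $d$ entries: a rectangle containing a private column has a single row and hence at most $d$ entries, and a new row has only private neighbours. It follows that any rectangle with more than $d$ entries lies in the original graph and has at least $d+1=k$ edges.
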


For the common entropy we use a rational threshold and a non-strict
inequality.

\begin{definition}\label{def:problem}
\CE: given a rational probability table $p$ and a rational number $r$, decide whether
$G(p)\le r$.
\end{definition}

\begin{theorem}[Hardness of the common entropy]\label{thm:G-main}
\CE\ is $\NP$-hard under polynomial-time many-one reductions. This remains true for instances in
which every row of $p$ has at most four nonzero entries and all nonzero entries of $p$ take one of
two values.
\end{theorem}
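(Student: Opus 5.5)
The proof will be a polynomial-time reduction from \CX, exact cover by $3$-sets in which every element lies in exactly three sets; Moore and Robson show that this problem is NP-complete. Let the instance consist of a ground set $U$ with $|U|=3m$ and a family $\mathcal{C}=\{C_1,\dots,C_{3m}\}$ of $3$-subsets, each element of $U$ lying in exactly three of them. We will build a table $p$ with one row for every incidence $(u,C)$, $u\in C$, plus a small gadget. The natural rectangles should be of two kinds. A \emph{set rectangle} for $C$ covers the three incidences of $C$ in a common column block. An \emph{element rectangle} for $u$ is a smaller block. Every row should carry at most four nonzero entries, taking only the values $a$ or $b$. The first step is Lemma~\ref{lem:decomp}: admissible auxiliaries $W$ correspond to nonnegative rank-one decompositions $p=\sum_w M_w$. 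Each $M_w$ is supported on a rectangle contained in $\supp p$, and the law of $W$ is the mass vector $(\sum M_w)_w$. Hence $G(p)$ is the minimum entropy of such a mass vector.

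The combinatorial core comes next. We enumerate all maximal rectangles of $\supp p$; by design they will be only the set rectangles, the element rectangles, and trivial ones such as single rows or columns. From this we bound the mass that any family of $k$ components can carry. Order the masses of an arbitrary decomposition decreasingly as $q_1\ge q_2\ge\cdots$. We will prove the partial-sum inequalities $\sum_{i\le k}q_i\le\sum_{i\le k}\qs_i$ for all $k$, where $\qs$ is the two-level vector produced by an exact cover. It has $m$ heavy components of mass $\alpha$, one per chosen set rectangle, and light components of mass $\beta$ covering the remaining entries. The argument will be a charging one. The top $k$ components lie in rectangles, and a rectangle can hold heavy mass only inside a set rectangle. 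Two components sharing an entry must split that entry. So the heavy mass is bounded by the mass of a family of pairwise disjoint set rectangles, which is at most $m\alpha$. Equality in the first $m$ partial sums forces $m$ disjoint sets, that is, an exact cover. The values $a,b$ will be tuned so that every split or partial use of a rectangle strictly loses mass.

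Schur concavity then gives $H(q)\ge H(\qs)$, so on yes-instances $G(p)=H(\qs)$, realised by the cover decomposition. For no-instances we need a quantitative version. If $q\prec\qs$ and some partial sum falls short by at least $\delta$, then a Robin Hood transfer (a $T$-transform) argument shows $H(q)\ge H(\qs)+c\,\delta\log(\alpha/\beta)$. Since $\alpha/\beta$ is a constant bigger than $1$, this gives a gain of order $\delta$. The shortfall $\delta$ is at least one entry mass, because the deficient first-$m$ sum loses at least $\min(a,b)$. Both quantities are inverse-polynomial in the table size. This gives a gap $\gamma=1/\mathrm{poly}$, and we take the threshold $r$ to be a rational in $[H(\qs),H(\qs)+\gamma/2]$. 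Such an $r$ is computable in polynomial time, since $H(\qs)$ is a two-term expression in logarithms of rationals that we can approximate to $O(\log(1/\gamma))$ bits. Then $G(p)\le r$ if and only if an exact cover exists.

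\textbf{Main obstacle.} The hardest part is the majorization step. Its difficulty lies less in designing rectangles than in controlling decompositions whose components overlap and split entries fractionally, with all partial sums bounded uniformly in $k$ and not just the first $m$. We would first try to prove the bound for $k\le m$ by the disjointness and charging argument. For $k>m$ we would use that every light rectangle has mass at most $\beta$ after discounting the entries already absorbed by the heavy ones. The remaining concern is whether the two entry values can be chosen so that $\qs$ has exactly two levels while keeping the row-sparsity constraint.
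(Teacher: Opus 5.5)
Your outline has the paper's architecture: Lemma~\ref{lem:decomp}, a two-level majorant $\qs$, Schur concavity on yes-instances, quantitative Schur concavity on no-instances, and a rational threshold inside the gap. However, the part that actually carries the reduction is missing. You never specify the table, and the majorization and deficit claims are design goals, not proofs; your own ``main obstacle'' paragraph concedes this.

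The charging step also fails as stated. A split entry is shared between components, not lost, so ``components sharing an entry must split it'' does not bound the top-$m$ mass by that of a disjoint family. Moreover, in your incidence-row layout the set rectangles of distinct sets use disjoint rows automatically. Nothing you describe makes two sets with a common element compete, i.e.\ nothing encodes ``each element is covered once.''

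What is needed is a global mass budget. In the paper this is the extra column with all entries $\eps=1/N$, of total mass exactly $1=3t\eps$. Every component then satisfies $\mu(F)\le3+\mu_0(F)$ and $\mu(F)\le3+3\eps$ (Lemma~\ref{lem:component}). This bounds the $k$-th partial sum by $\min\{ka,(3k+1)/Z,1\}$ for every $k$ at once, including the range $k>t$ that you leave open. The per-component cap also needs the intersection condition~\ref{R3}, which \CX\ does not provide: a rectangle on two rows of $S_j\cap S_k$ with columns $\{0,j,k\}$ is rank one with mass $4+2\eps>3+3\eps$. The paper enforces~\ref{R3} with the grid gadget of Lemma~\ref{lem:restricted}; your proposal does not control rectangles spanning several sets at all.

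Second, the inverse-polynomial shortfall on no-instances is only asserted (``the deficient first-$m$ sum loses at least $\min(a,b)$''). Components may split entries in arbitrary real proportions, so this is exactly the nontrivial point. The paper's robust argument goes as follows. If $Z\,D_t(q)<\eps$, each of the $t$ heaviest components has mass $>3+2\eps$, so by Lemma~\ref{lem:component}\ref{c:c} it lives in some $S_{j_i}\times\{0,j_i\}$. Their extra-column masses sum to more than $1-\eps$, yet at most $\eps\lvert V\rvert$ with $V=\bigcup_i S_{j_i}$. This forces $V=U$ and hence an exact cover, giving $D_t(q)\ge1/(NZ)$. Without a concrete table and an argument of this kind, neither the majorization nor the gap is established.

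Incidentally, you do not need $\alpha/\beta$ to be a constant. In the paper $a/b=1+1/N$, and $\delta\log\frac{a-\delta}{b+\delta}$ with $\delta=1/(NZ)$ still exceeds $\Delta=1/(4N^2Z)$.
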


Consequently, a polynomial-time algorithm that computes $G(p)$ exactly, in any output format that
allows comparison with a given rational number in polynomial time, would decide \CE\ in polynomial
time; the same holds for $J(p)$ and the decision problem~\eqref{eq:decision}.
The two problems are formulated differently for a reason. On the hard
instances for $J$, the no-instances satisfy $J(p)=r$ exactly, at an integer
threshold (Remark~\ref{rem:strict}). On the hard instances for $G$, the
yes-instances satisfy $G(p)=h_0$ for an explicit, in general irrational,
number $h_0$, and the no-instances satisfy $G(p)\ge h_0+\Delta$; a rational
threshold is placed inside this gap (Section~\ref{sec:G-proof}).

\subsection{Rank-one decompositions}
\label{subsec:decompositions}

All matrices have real entries, and their rows and columns are indexed by finite sets; inequalities
between matrices are meant entrywise. The \emph{support} of a matrix $F$ is
$\supp F=\{(x,y)\colon F(x,y)\ne0\}$ and its \emph{mass} is $\mu(F)=\sum_{x,y}F(x,y)$.
A \emph{nonnegative rank-one matrix} is a matrix of the form $F=gh^{\top}$, where $g$ and $h$ are
nonzero vectors with nonnegative entries. (Every nonzero nonnegative matrix of rank one has this
form, but we shall not need this fact.) Its support is
\begin{equation}\label{eq:rectangle}
  \supp F=\supp g\times\supp h,\quad\text{a rectangle.}
\end{equation}

\begin{definition}\label{def:decomposition}
A \emph{rank-one decomposition} of a nonnegative matrix $A$ is a finite family $(F_w)_{w\in\cW}$ of
nonnegative rank-one matrices such that $\sum_{w\in\cW}F_w=A$. The matrices $F_w$ are its
\emph{components}, and $(\mu(F_w))_{w\in\cW}$ is its \emph{mass vector}.
\end{definition}

Since all terms are nonnegative, every component of a rank-one decomposition of $A$ satisfies
$0\le F_w\le A$; in particular, its support is a rectangle contained in the support of $A$.

\begin{lemma}\label{lem:decomp}
Let $p$ be a probability table on $\cX\times\cY$.
\begin{enumerate}[label=\textup{(\alph*)}]
\item If $W$ is an admissible auxiliary for $p$, then the matrices
$F_w(x,y)=\Pr(X=x,\,Y=y,\,W=w)$, for $w$ with $\Pr(W=w)>0$, form a rank-one decomposition of $p$
with $\mu(F_w)=\Pr(W=w)$.
\item Conversely, for every rank-one decomposition $(F_w)_{w\in\cW}$ of $p$ there is an admissible
auxiliary $W$ with $\Pr(W=w)=\mu(F_w)$ for every $w\in\cW$.
\end{enumerate}
\end{lemma}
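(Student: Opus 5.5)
Both parts are direct verifications from the definition~\eqref{eq:independence} of admissibility; the only care needed is with outcomes of zero probability and with writing the joint law explicitly.

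For part (a), I fix an admissible $W$ and, for each $w$ with $\Pr(W=w)>0$, set $g_w(x)=\Pr(X=x,W=w)$ and $h_w(y)=\Pr(Y=y\mid W=w)$. Multiplying~\eqref{eq:independence} by $\Pr(W=w)$ gives
\[
F_w(x,y)=\Pr(X=x,Y=y,W=w)=\Pr(W=w)\Pr(X=x\mid W=w)\Pr(Y=y\mid W=w)=g_w(x)h_w(y),
\]
so $F_w=g_wh_w^{\top}$. Both vectors are nonnegative, and both are nonzero, because $\sum_x g_w(x)=\Pr(W=w)>0$ and $\sum_y h_w(y)=1$. Hence each $F_w$ is a nonnegative rank-one matrix. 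Summing over the values $w$ of positive probability, the law of total probability gives $\sum_w F_w=p$, since values of probability zero contribute nothing. Finally, $\mu(F_w)=\sum_{x,y}\Pr(X=x,Y=y,W=w)=\Pr(W=w)$.

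For part (b), I take a rank-one decomposition $(F_w)_{w\in\cW}$ of $p$ with $F_w=g_wh_w^{\top}$ and define a joint law on $\cX\times\cY\times\cW$ by $\Pr(X=x,Y=y,W=w)=F_w(x,y)$. This is a probability distribution: its entries are nonnegative and sum to $\sum_w\mu(F_w)=\mu(p)=1$. Its $(X,Y)$-marginal is $\sum_wF_w=p$, so $W$ is an auxiliary, and $\Pr(W=w)=\mu(F_w)$. Each $\mu(F_w)=\bigl(\sum_x g_w(x)\bigr)\bigl(\sum_y h_w(y)\bigr)$ is positive, since $g_w,h_w$ are nonzero and nonnegative, so every $w$ has positive probability.

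It remains to check admissibility. Write $a=\sum_x g_w(x)$ and $b=\sum_y h_w(y)$. Conditioning on $W=w$ gives
\[
\Pr(X=x,Y=y\mid W=w)=\frac{g_w(x)h_w(y)}{ab},
\]
and the conditional marginals are $\Pr(X=x\mid W=w)=g_w(x)/a$ and $\Pr(Y=y\mid W=w)=h_w(y)/b$. Their product is exactly the conditional joint above, which is~\eqref{eq:independence}.

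The only point requiring attention is part (b), where one might worry that some component has zero mass and so is dropped from the support of $W$. This cannot happen, because components are required to be nonzero by definition, so $\Pr(W=w)=\mu(F_w)>0$ for every $w\in\cW$ and the equality holds for every $w\in\cW$.
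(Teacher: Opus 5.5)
Your proof is correct and follows essentially the same argument as the paper. In (a) you use the same factorization $F_w=g_wh_w^{\top}$ with $g_w(x)=\Pr(X=x,W=w)$ and $h_w(y)=\Pr(Y=y\mid W=w)$, and in (b) you use the same joint law $(x,y,w)\mapsto F_w(x,y)$; the only cosmetic difference is that the paper first rescales $h_w$ to sum to one, whereas you carry both normalizing sums explicitly.
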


\begin{proof}
(a) For each $w$ with $\Pr(W=w)>0$, conditional independence gives $F_w=g_wh_w^{\top}$ with
$g_w(x)=\Pr(W=w)\Pr(X=x\mid W=w)$ and $h_w(y)=\Pr(Y=y\mid W=w)$, which are nonzero nonnegative
vectors. Thus $F_w$ is a nonnegative rank-one matrix with $\mu(F_w)=\Pr(W=w)$, and
$\sum_wF_w=p$ by the law of total probability.

(b) Write $F_w=g_wh_w^{\top}$ with nonnegative vectors $g_w,h_w$, scaled so that the entries of
$h_w$ sum to one; then the entries of $g_w$ sum to $\mu(F_w)>0$. The function
$(x,y,w)\mapsto F_w(x,y)$ is a probability distribution on $\cX\times\cY\times\cW$, because
$\sum_w\mu(F_w)=\mu(p)=1$. Let $(X,Y,W)$ have this distribution. The pair $(X,Y)$ has distribution
$\sum_wF_w=p$, the variable $W$ has distribution $(\mu(F_w))_{w}$, and conditionally on $W=w$ the
pair $(X,Y)$ has distribution $F_w/\mu(F_w)=(g_w/\mu(F_w))h_w^{\top}$, which is a product of a
distribution on $\cX$ and a distribution on $\cY$. Hence $X\indep Y\mid W$.
\end{proof}

By Lemma~\ref{lem:decomp}, the least number of values of an admissible auxiliary for $p$ is the
nonnegative rank of $p$, that is, the least number of nonnegative rank-one matrices summing to $p$;
hence $G(p)\le\log\operatorname{rank}_+(p)$, as recorded in~\eqref{eq:intro-rank}. The proof of
Lemma~\ref{lem:no-rectangle} in Section~\ref{sec:rectangles} rests on the same observation: the
conditional tables $\Pr(X=x,Y=y\mid W=w)$ are products, so their supports are rectangles contained
in the support of $p$.

\subsection{Majorization}\label{subsec:majorization}

Entropy does not depend on the order of the entries of a probability vector, and neither does the
notion of majorization defined below. When dealing with majorization we therefore list the entries
of a probability vector $q$ with $\lvert I\rvert$ entries in nonincreasing order,
$q_1\ge q_2\ge\dots\ge q_{\lvert I\rvert}$, and we put $q_k=0$ for $k>\lvert I\rvert$; in this way,
vectors with different numbers of entries can be compared.

\begin{definition}\label{def:majorization}
For probability vectors $q$ and $q'$, we say that $q$ is \emph{majorized} by $q'$, and write
$q\preceq q'$, if
\[
  q_1+\dots+q_k\le q'_1+\dots+q'_k\qquad\text{for every } k\ge 1.
\]
\end{definition}

Since both vectors sum to one, this is the usual notion of
majorization~\cite[Chapter~1]{MarshallOlkinArnold2011}, applied after padding the shorter vector
with zeros. It is well known that Shannon  entropy is Schur
concave, that is,  $q\preceq q'$ implies $H(q)\ge H(q')$; see~\cite{MarshallOlkinArnold2011}.
Lemma~\ref{lem:entropy} below proves a quantitative version of this fact for the vectors $q'$
that arise in the reduction for the common entropy.

\section{Hardness of Wyner common information}
\label{sec:rectangles}

Sections~\ref{sec:rectangles}--\ref{sec:gap} are devoted to Wyner's common information and
to the proof of Theorem~\ref{thm:J-main}. Our approach is the following.
We first characterize strict improvement over the auxiliary $W=X$ in terms
of rectangles in the support of $p$. We then reduce Maximum Edge Biclique to this
criterion, using Peeters's NP-completeness theorem~\cite[Theorem~1]{Peeters2003}. 
Section~\ref{sec:gap} reuses the same entropy estimate
to obtain an explicit gap and hardness of rational approximation of \eqref{eq:wyner}.

Let $E\subseteq\mathcal X\times\mathcal Y$ be nonempty. For each
$x\in\mathcal X$, define
\[
E_x=\{y\in\mathcal Y:(x,y)\in E\}.
\]
Assume $|\mathcal X|=n$ and $|E_x|=d\ge1$ for every $x$.
Thus the row indexed by $x$ contains the $d$  pairs
$\{x\}\times E_x$, and $N:=|E|=nd$.
A \emph{rectangle} in $E$ is a set $A\times B\subseteq E$ with
$A,B$ nonempty; the size of the rectangle $A\times B$  is $|A||B|$.

Consider the uniform probability table on $E$, that is, for all $x\in \mathcal X$ and $y\in \mathcal Y$
\begin{equation}
p(x,y)=
\begin{cases}
1/N,&(x,y)\in E,\\
0,&(x,y)\notin E.
\end{cases}
\label{eq:uniform}
\end{equation}
For every $x$, summing the  $d$ positive entries in $E_x$ gives
$\Pr(X=x)=d/N=1/n$. Consequently,
\[
\Pr(Y=y\mid X=x)=
\begin{cases}
(1/N)/(1/n)=1/d,&y\in E_x,\\
0,&y\notin E_x.
\end{cases}
\]
Hence $X$ is uniform on $\mathcal X$, and the conditional law of $Y$ given
$X=x$ is uniform on $E_x$. In particular,
\begin{equation}
H(X)=\log n,\qquad H(Y\mid X)=\log d,\qquad H(X,Y)=\log N.
\label{eq:entropies}
\end{equation}

\Needspace{7\baselineskip}
\begin{lemma}
\label{lem:no-rectangle}
If every rectangle in $E$ has at most $d$ entries, then the table
\eqref{eq:uniform} satisfies $J(p)=\log n$.
\end{lemma}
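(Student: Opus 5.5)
The upper bound $J(p)\le H(X)=\log n$ is already recorded in \eqref{eq:row-upper} and \eqref{eq:entropies}. So the task is the matching lower bound $I(X,Y;W)\ge\log n$ for every admissible auxiliary $W$. Using \eqref{eq:conditional} together with $H(X,Y)=\log N=\log n+\log d$, this is equivalent to
\[
H(X,Y\mid W)=\sum_{w}\Pr(W=w)\,H\bigl(p(\cdot,\cdot\mid w)\bigr)\le\log d .
\]
We therefore plan to bound each conditional entropy $H(p(\cdot,\cdot\mid w))$ separately by $\log d$ and then average over $w$.

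Fix $w$ with $\Pr(W=w)>0$. By Lemma~\ref{lem:decomp}(a), $F_w(x,y)=\Pr(X=x,Y=y,W=w)$ is a nonnegative rank-one matrix with $F_w\le p$. Its support is therefore a rectangle $A\times B\subseteq E$, by \eqref{eq:rectangle} and the remark following Definition~\ref{def:decomposition}. The conditional table $p(\cdot,\cdot\mid w)=F_w/\mu(F_w)$ is a probability distribution supported on $A\times B$. By hypothesis $|A||B|\le d$, so its entropy is at most $\log(|A||B|)\le\log d$, since entropy is at most the logarithm of the support size. Averaging with the weights $\Pr(W=w)$ gives $H(X,Y\mid W)\le\log d$, hence $I(X,Y;W)\ge\log N-\log d=\log n$. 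Taking the infimum over $W$ yields $J(p)\ge\log n$, and equality follows.

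There is no real obstacle here. The only point requiring care is the identification of the support of each conditional table as a rectangle inside $E$, which Lemma~\ref{lem:decomp} and \eqref{eq:rectangle} already provide. One must also check that uniformity of $p$ is used only through $H(X,Y)=\log N$, so that no finer property of the uniform table is needed.
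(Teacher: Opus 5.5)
Your proposal is correct and follows the paper's proof essentially step for step. The upper bound comes from $W=X$, and the lower bound holds because each conditional table is supported on a rectangle in $E$ with at most $d$ entries, so $H(X,Y\mid W)\le\log d$; the only cosmetic difference is that you get the rectangular support from Lemma~\ref{lem:decomp}(a) and \eqref{eq:rectangle}, whereas the paper reads it off directly from the conditional-independence factorization.
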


\begin{proof}
Fix an admissible $W$. For each $w$ with $\Pr(W=w)>0$, let
$$q_w(x,y)=\Pr(X=x,Y=y\mid W=w)$$ and define
\[
A_w=\{x:\Pr(X=x\mid W=w)>0\},
\qquad
B_w=\{y:\Pr(Y=y\mid W=w)>0\}.
\]
Since 
$$q_w(x,y)=\Pr(X=x,Y=y\mid W=w)=\Pr(X=x\mid W=w)\Pr(Y=y\mid W=w), $$
 the support of $q_w$ is exactly
$A_w\times B_w$. Moreover, one has that
\[
p(x,y)=\sum_{v:\,\Pr(W=v)>0}\Pr(W=v)q_v(x,y)
\ge \Pr(W=w)q_w(x,y).
\]
Since all summands are nonnegative, it holds that $q_w$ is zero wherever $p$ is zero.
By \eqref{eq:uniform}, 
 $A_w\times B_w\subseteq E$. Therefore, by the hypothesis of the lemma, it holds that  $|A_w||B_w|\le d$.

Recall that a probability vector with $s$ positive entries has entropy at most $\log s$.
Applying this inequality to each joint conditional distribution $q_w$ gives
$H(X,Y\mid W=w)\le\log d$. Averaging over $w$ yields
$H(X,Y\mid W)\le\log d$, and therefore
\[
I(X,Y;W)\ge\log N-\log d=\log n.
\]
This holds for every admissible $W$, while $W=X$ attains $\log n$
by~\eqref{eq:row-upper} and~\eqref{eq:entropies}.
\end{proof}

\begin{lemma}
\label{lem:large-rectangle}
If $E$ contains a rectangle with more than $d$ entries, then the table
\eqref{eq:uniform} satisfies $J(p)<\log n$.
\end{lemma}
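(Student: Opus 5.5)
The plan is to beat the trivial auxiliary $W=X$ by an explicit admissible auxiliary, built through Lemma~\ref{lem:decomp}(b) as a small perturbation of the row decomposition. Fix a rectangle $A\times B\subseteq E$ with $m:=|A||B|>d$. For a parameter $0<\eps<1/N$ consider the family consisting of the component
\[
F_0=\eps\,\ind_A\ind_B^{\top},
\]
which is nonnegative, of rank one and supported on $A\times B$, together with, for each $x\in\cX$, the single-row component $F_x$ equal to row $x$ of $p-F_0$. For $x\notin A$ this is just row $x$ of $p$. For $x\in A$ it has entries $1/N-\eps$ on $B$ and $1/N$ on $E_x\setminus B$, which are nonnegative because $B\subseteq E_x$. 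A matrix supported on a single row is nonnegative rank-one, and the components sum to $p$. So this is a rank-one decomposition, and Lemma~\ref{lem:decomp}(b) gives an admissible $W_\eps$ whose conditional tables are $F_w/\mu(F_w)$. At $\eps=0$ the component $F_0$ disappears and we recover $W=X$.

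By~\eqref{eq:conditional} and~\eqref{eq:entropies} it suffices to show that
\[
f(\eps):=H(X,Y\mid W_\eps)=\sum_w\Bigl(-\sum_{x,y}F_w(x,y)\log F_w(x,y)+\mu(F_w)\log\mu(F_w)\Bigr)
\]
exceeds $f(0)=\log d$ for some small $\eps>0$. Then $I(X,Y;W_\eps)=\log N-f(\eps)<\log n$.

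The computation splits into three kinds of terms.
\begin{itemize}
\item The component $F_0$ has $m$ entries equal to $\eps$ and mass $m\eps$, so it contributes exactly $\eps m\log m$.
\item Rows $x\notin A$ contribute a constant independent of $\eps$.
\item Each row $x\in A$ contributes $-|B|(1/N-\eps)\log(1/N-\eps)+(1/n-\eps|B|)\log(1/n-\eps|B|)$ plus a constant. These functions are smooth near $\eps=0$ because $1/N,1/n>0$.
\end{itemize}
Differentiating the third kind at $\eps=0$ gives, per row,
\[
|B|\bigl(\log\tfrac1N+\log e\bigr)-|B|\bigl(\log\tfrac1n+\log e\bigr)=-|B|\log d,
\]
using $N=nd$. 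Summing over the $|A|$ rows of $A$ yields $-m\log d$. Hence $f$ is differentiable from the right at $0$ with
\[
f'(0^+)=m\log m-m\log d=m\log(m/d)>0,
\]
since $m>d$. Therefore $f(\eps)>\log d$ for all sufficiently small $\eps>0$, and $J(p)\le I(X,Y;W_\eps)<\log n$.

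The only delicate point is the component $F_0$, where $-x\log x$ is not differentiable at $0$. This is sidestepped by computing its contribution exactly as $\eps m\log m$: the $\eps\log\eps$ terms cancel between entries and mass. All other terms are smooth. If a quantitative version is wanted later (as for the gap in Section~\ref{sec:gap}), I would instead fix an explicit $\eps$, say of order $1/N$, and bound the second-order remainder of the row terms using concavity of $-x\log x$.
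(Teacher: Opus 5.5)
Your proposal is correct and uses the paper's own auxiliary: your decomposition with parameter $\eps$ is exactly the paper's $W_t$ with $t=N\eps$, and you obtain its admissibility from Lemma~\ref{lem:decomp}(b) rather than checking conditional independence label by label. The only difference is the final step: you show that the right derivative at $\eps=0$ equals $m\log(m/d)>0$, which suffices for this lemma, whereas the paper proves the explicit bound $\Phi(t)\ge mt\bigl(\log\frac{m}{d}-\frac{t}{\ln 2}\bigr)$ via $\ln u\le u-1$; Proposition~\ref{prop:gap} later reuses that bound with $t=1/(2(d+1))$.
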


\begin{proof}
Choose $A\times B\subseteq E$ and write
$a=|A|$, $b=|B|$, and $m=ab>d$.
Since $B\subseteq E_u$ for every $u\in A$, we have $b\le d$.
For $0\le t<1$, define he random variable $W_t$ that takes  values in
$\mathcal X\cup\{\ast\}$, where $\ast\notin\mathcal X$, as follows.
Draw $(X,Y)$ from $p$. If $(X,Y)\in A\times B$, output $\ast$ with
probability $t$ and the row label $X$ with probability $1-t$;
otherwise output $X$. Explicitly, the resulting joint distribution is
\begin{equation}
\Pr(X=x,Y=y,W_t=w)=
\begin{cases}
t/N,&(x,y)\in A\times B,\quad w=\ast,\\
(1-t)/N,&(x,y)\in A\times B,\quad w=x,\\
1/N,&(x,y)\in E\setminus(A\times B),\quad w=x,\\
0,&\text{otherwise}.
\end{cases}
\label{eq:joint-construction}
\end{equation}
One can easily see that summing over $w$ gives $p(x,y)$.
In particular, $W_0=X$.

\smallskip
We now compute the conditional distributions and check for the admissibility of $W_t.$
For $t>0$, the  label $\ast$ has probability
\[
\Pr(W_t=\ast)=\sum_{(x,y)\in A\times B}\frac{t}{N}
=\frac{mt}{N}.
\]
Let $\mathbf1_C$ denote the indicator function of a set $C$.
We first compute the conditional joint distribution given
$W_t=\ast$. For $t>0$, the construction and $m=ab$ give
\[
\begin{aligned}
\Pr(X=x,Y=y\mid W_t=\ast)
&=\frac{\Pr(X=x,Y=y,W_t=\ast)}
        {\Pr(W_t=\ast)}\\
&=\frac{(t/N)\mathbf1_A(x)\mathbf1_B(y)}{mt/N}\\
&=\frac{\mathbf1_A(x)\mathbf1_B(y)}{ab}.
\end{aligned}
\]
Thus this conditional joint distribution is uniform on
$A\times B$ and zero outside it.

We next compute its two conditional marginals.
Summing over $y$ gives
\[
\begin{aligned}
\Pr(X=x\mid W_t=\ast)
&=\sum_{y\in\mathcal Y}
  \Pr(X=x,Y=y\mid W_t=\ast)\\
&=\frac{\mathbf1_A(x)}{ab}
  \sum_{y\in\mathcal Y}\mathbf1_B(y)
=\frac{\mathbf1_A(x)}a,
\end{aligned}
\]
because $|B|=b$. Similarly, since $|A|=a$,
\[
\begin{aligned}
\Pr(Y=y\mid W_t=\ast)
&=\sum_{x\in\mathcal X}
  \Pr(X=x,Y=y\mid W_t=\ast)\\
&=\frac{\mathbf1_B(y)}{ab}
  \sum_{x\in\mathcal X}\mathbf1_A(x)
=\frac{\mathbf1_B(y)}b.
\end{aligned}
\]
Hence, for every $x\in\mathcal X$ and
$y\in\mathcal Y$,
\begin{equation}
\begin{aligned}
\Pr(X=x,Y=y\mid W_t=\ast)
&=\frac{\mathbf1_A(x)\mathbf1_B(y)}{ab}\\
&=\Pr(X=x\mid W_t=\ast)
  \Pr(Y=y\mid W_t=\ast).
\end{aligned}
\label{eq:star-product}
\end{equation}
For a row label $u\in\mathcal X$, summing the masses that retain this label
gives
\[
\Pr(W_t=u)=
\begin{cases}
[b(1-t)+(d-b)]/N=(d-bt)/N,&u\in A,\\
d/N,&u\notin A.
\end{cases}
\]
All these probabilities are positive, since
$d-bt\ge d(1-t)>0$, including when $b=d$.
Write $q_{u,t}(y)=\Pr(Y=y\mid W_t=u)$.
If $u\notin A$, then $q_{u,t}(y)=\mathbf1_{E_u}(y)/d$.
If $u\in A$, division by $(d-bt)/N$ gives
\begin{equation}
q_{u,t}(y)=
\begin{cases}
(1-t)/(d-bt),&y\in B,\\
1/(d-bt),&y\in E_u\setminus B,\\
0,&y\notin E_u.
\end{cases}
\label{eq:affected-row}
\end{equation}
The event $W_t=u$ implies $X=u$. Thus, for every $x,y$,
\[
\Pr(X=x,Y=y,W_t=u) =\Pr(X=x)\Pr(Y=y,W_t=u\mid X=x)
=
\mathbf1_{\{u\}}(x)\Pr(Y=y,W_t=u).
\]
In fact, if  $x\ne u$, both sides are zero; if $x=u$, the two
events whose probabilities appear are identical.
Dividing by $\Pr(W_t=u)>0$ gives
\[
\Pr(X=x,Y=y\mid W_t=u)
=
\mathbf1_{\{u\}}(x)\Pr(Y=y\mid W_t=u)
=
\mathbf1_{\{u\}}(x)q_{u,t}(y).
\]
Moreover, since $X=u$ is certain given that  $W_t=u$, it holds that
\[
\Pr(X=x\mid W_t=u)=\mathbf1_{\{u\}}(x).
\]
Consequently,
\begin{equation}
\begin{split}
\Pr(X=x,Y=y\mid W_t=u)
&=\mathbf1_{\{u\}}(x)q_{u,t}(y)\\
&=\Pr(X=x\mid W_t=u)\Pr(Y=y\mid W_t=u).
\end{split}
\label{eq:row-product}
\end{equation}
Equations~\eqref{eq:star-product} and~\eqref{eq:row-product} prove the
required factorization for every label of positive probability. 

At \(t=0\), the construction gives \(W_0=X\) with probability one. The label \(\ast\) never occurs, so no conditional distribution given that label is needed. Every row label \(u\) has probability \(d/N>0\), and the factorization~\eqref{eq:row-product} remains valid at \(t=0\). Hence \(W_0\) is admissible as well. Thus $W_t$ is admissible for every $0\le t<1$.

\smallskip

In formula~\eqref{eq:conditional}, each label $w$ contributes its probability multiplied
by the entropy of its joint conditional distribution. The common label $\ast$  is
uniform on $m$ pairs and contributes $(mt/N)\log m$ (zero at $t=0$).
Each of the $n-a$ row labels outside $A$ contributes $(d/N)\log d$.
For $u\in A$, the $b$ probabilities of the first type
in~\eqref{eq:affected-row} and the $d-b$ of the second type give
\begin{align*}
&\Pr(W_t=u)H(X,Y\mid W_t=u)\\
&\quad=-\frac1N\left[
 b(1-t)\log\frac{1-t}{d-bt}
 +(d-b)\log\frac1{d-bt}\right]\\
&\quad=\frac1N\left[
 (d-bt)\log(d-bt)-b(1-t)\log(1-t)\right].
\end{align*}
This expression is the same for every $u\in A$, regardless of the names of
the elements of $E_u\setminus B$.

There are $a$ affected row labels. Adding their contributions
to those of the  label $\ast$ and the $n-a$ unaffected row
labels, and using $ab=m$, gives
\[
\begin{aligned}
N H(X,Y\mid W_t)
={}&mt\log m+(n-a)d\log d\\
&+a(d-bt)\log(d-bt)
-m(1-t)\log(1-t).
\end{aligned}
\]
To isolate the contribution coming  from $H(X,Y\mid W_0)=\log d$, use
$N=nd$ to write
\[
(n-a)d\log d=N\log d-ad\log d.
\]
Substituting this identity into the preceding expression
and subtracting $N\log d$ yields
\[
\begin{aligned}
N\bigl(H(X,Y\mid W_t)-\log d\bigr)
={}&mt\log m\\
&+a\bigl[(d-bt)\log(d-bt)-d\log d\bigr]\\
&-m(1-t)\log(1-t).
\end{aligned}
\]
Denote the right-hand side of the above formula by $\Phi(t)$. Dividing by $N$
gives the conditional-entropy identity

$$
H(X,Y\mid W_t)
=\log d+\frac{\Phi(t)}N.
$$
 Since
$H(X,Y)=\log N$ and $\log N-\log d=\log n$, the next 
mutual-information identity follows:
\begin{equation}
\begin{aligned}
I(X,Y;W_t)
&=H(X,Y)-H(X,Y\mid W_t)\\
&=\log N-\log d-\frac{\Phi(t)}N
 =\log n-\frac{\Phi(t)}N,
\end{aligned}
\label{eq:gain-identity}
\end{equation}
where
\begin{equation}
\begin{split}
\Phi(t)={}&mt\log m
+a\bigl[(d-bt)\log(d-bt)-d\log d\bigr]\\
&\ -m(1-t)\log(1-t).
\end{split}
\label{eq:gain-function}
\end{equation}
Thus $\Phi(t)/N$ is the increase in conditional entropy
over $\log d$, and equally the decrease in mutual
information below $\log n$. At $t=0$, these identities
hold with $\Phi(0)=0$ and $W_0=X$.

\smallskip
We now claim that, for $0\le t<1$,
\begin{equation}
\Phi(t)\ge mt\left(\log\frac md-\frac{t}{\ln2}\right).
\label{eq:gain-bound}
\end{equation}
We use the well known inequality
$\ln u\le u-1$ for $u>0$.
Applying it to $u=d/z$, with $z>0$, yields
$z\log(z/d)\ge(z-d)/\ln2$ and hence
\[
z\log z-d\log d
=(z-d)\log d+z\log(z/d)
\ge(z-d)\left(\log d+\frac1{\ln2}\right).
\]
Taking $z=d-bt>0$ gives
\[
(d-bt)\log(d-bt)-d\log d
\ge-bt\left(\log d+\frac1{\ln2}\right).
\]
Also, $\ln(1-t)\le-t$, so
\[
-(1-t)\log(1-t)\ge\frac{t(1-t)}{\ln2}.
\]
Substituting these inequalities in~\eqref{eq:gain-function} and using
$ab=m$, one gets 
\[
\Phi(t)\ge mt\log m-mt\left(\log d+\frac1{\ln2}\right)
       +\frac{mt(1-t)}{\ln2}
=mt\left(\log\frac md-\frac{t}{\ln2}\right),
\]
proving ~\eqref{eq:gain-bound}.
Since $m>d$, this lower bound is positive whenever
$0<t<\min\{1,\ln(m/d)\}$.
For any such $t$, admissibility and~\eqref{eq:gain-identity} imply
\[
J(p)\le I(X,Y;W_t)=\log n-\frac{\Phi(t)}N<\log n.\qedhere
\]
\end{proof}

It is  useful to summarize the results of Lemmas~\ref{lem:no-rectangle}
and~\ref{lem:large-rectangle} in the following Corollary.
\begin{corollary}
\label{cor:criterion}
For the uniform table~\eqref{eq:uniform} with $d$ positive entries in each row,
\begin{equation}
J(p)<\log n
\quad\Longleftrightarrow\quad
E\text{ contains a rectangle }A\times B\text{ with }|A||B|>d.
\label{eq:criterion}
\end{equation}
If no such rectangle exists, $J(p)=\log n$.
\end{corollary}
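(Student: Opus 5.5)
The corollary follows directly from Lemmas~\ref{lem:no-rectangle} and~\ref{lem:large-rectangle}, which together cover both cases. I would set up the uniform table~\eqref{eq:uniform}, with $|\mathcal X|=n$ and $|E_x|=d$ for every $x$. These are exactly the hypotheses of both lemmas, so both apply to $p$ without any further check.

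For the implication ``$\Leftarrow$'', suppose that $E$ contains a rectangle $A\times B$ with $|A||B|>d$. Lemma~\ref{lem:large-rectangle} then gives $J(p)<\log n$ immediately.

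For ``$\Rightarrow$'', I would argue by contraposition. Suppose that no rectangle in $E$ has size larger than $d$, so every rectangle has at most $d$ entries. This is the hypothesis of Lemma~\ref{lem:no-rectangle}, which yields $J(p)=\log n$; in particular $J(p)<\log n$ fails. This also proves the final sentence of the statement.

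The two cases, ``some rectangle has size $>d$'' and ``all rectangles have size $\le d$'', are complementary. Hence the equivalence~\eqref{eq:criterion} holds. It also gives the dichotomy that $J(p)$ is either strictly below $\log n$ or exactly equal to it, which is consistent with the bound $J(p)\le H(X)=\log n$ from~\eqref{eq:row-upper}. There is no real obstacle: the only point to verify is that the rectangle notion used in the two lemmas is the same, namely $A\times B\subseteq E$ with $A$ and $B$ nonempty. This holds, so the lemmas combine verbatim.
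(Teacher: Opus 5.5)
Your proposal is correct and matches the paper. The paper states the corollary simply as a summary of the two preceding results: Lemma~\ref{lem:large-rectangle} gives the implication from a rectangle of size larger than $d$ to $J(p)<\log n$, and Lemma~\ref{lem:no-rectangle} gives $J(p)=\log n$ when no such rectangle exists, which is both the contrapositive of the other direction and the final sentence.
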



\section{The polynomial-time reduction}
\label{sec:reduction}

We prove the NP-hardness of the decision problem \eqref{eq:decision} by reducing from  
\emph{Maximum Edge Biclique}.
The {Maximum Edge Biclique} decision problem has as input an explicitly
given finite bipartite graph $\Gamma=(L,R,E_\Gamma)$ and a positive integer $k$ in binary.
It asks whether there are nonempty sets $A\subseteq L$, $B\subseteq R$ with
\[
A\times B\subseteq E_\Gamma,\qquad |A||B|\ge k.
\]
The {Maximum Edge Biclique} decision problem is  NP-complete by
Peeters~\cite[Theorem~1, p.~652]{Peeters2003}. 

\begin{lemma}
\label{lem:reduction}
Given an instance $(\Gamma,k)$ of Maximum Edge Biclique,
where $\Gamma=(L,R,E_\Gamma)$ is an explicitly given finite bipartite
graph and $k\ge1$ is encoded in binary, one can construct,
in polynomial time, integers $d\ge1$, $r\ge1$, and a
bipartite graph
\[
\Gamma'=(\mathcal X,\mathcal Y,E),
\qquad E\subseteq\mathcal X\times\mathcal Y,
\]
with disjoint left and right vertex sets $\mathcal X$
and $\mathcal Y$, respectively, such that:
\begin{enumerate}
\item $|\mathcal X|=n=2^r$, every vertex of $\mathcal X$
has degree $d$, and every vertex of $\mathcal Y$ has
degree between $1$ and $d$;
\item $\Gamma$ has a biclique with at least $k$ edges if and
only if there exist nonempty sets $A\subseteq\mathcal X$
and $B\subseteq\mathcal Y$ such that
$A\times B\subseteq E$ and $|A||B|>d$;
\item the complete rational probability table indexed by
$\mathcal X\times\mathcal Y$ and uniform on $E$, together
with the integer $r$, has polynomial encoding length.
\end{enumerate}
\end{lemma}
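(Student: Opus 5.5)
The plan is a padding construction. It makes every left vertex have degree exactly $d:=k-1$ by attaching fresh ``private'' right vertices of degree one. It then adds dummy rows until the number of rows is a power of two. Because a private column has degree one, it can only lie in a rectangle with a single row, and such a rectangle has at most $d$ entries. Hence every rectangle $A\times B\subseteq E$ with $|A||B|>d$ must have $|A|\ge 2$ and $B$ consisting of original right vertices only, which makes it a biclique of $\Gamma$. The threshold $|A||B|>k-1$ is the same as $|A||B|\ge k$, so item~2 will follow.

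First I would dispose of degenerate instances in polynomial time by mapping them to fixed small instances.
\begin{itemize}
\item If some vertex of $\Gamma$, on either side, has degree at least $k$, the star at that vertex is a biclique with at least $k$ edges, so the answer is yes. This covers $k=1$ whenever $E_\Gamma\neq\emptyset$.
\item If $k>|L||R|$, or if $E_\Gamma=\emptyset$, the answer is no.
\end{itemize}
For yes I output the full $2\times2$ table, with $n=2$, $r=1$, $d=2$: column degrees are $2\le d$, and the rectangle has size $4>2$. For no I output $n=2$, $r=1$, $d=1$ with two distinct columns: every rectangle has size $1$. Both fixed instances satisfy item~1.

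Otherwise $\max\deg<k\le|L||R|$, so $d=k-1\ge1$ is polynomially bounded in the size of $\Gamma$ even though $k$ was given in binary.

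Next I delete the isolated right vertices of $\Gamma$; they lie in no biclique. Then I build $\Gamma'$:
\begin{itemize}
\item $\mathcal X\supseteq L$ and $\mathcal Y\supseteq R$.
\item Each $u\in L$ keeps its edges and receives $d-\deg(u)\ge0$ fresh private neighbours.
\item Let $r=\lceil\log_2\max\{|L|,2\}\rceil$ and $n=2^r\ge2$. I add $n-|L|$ dummy rows, each joined to $d$ fresh private columns.
\end{itemize}
Item~1 is then checked directly. Every row has degree $d$. Private columns have degree $1$. An original column $v$ has degree $\deg_\Gamma(v)$, which is between $1$ and $k-1=d$ by the preliminary filtering.

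For item~2, consider a rectangle $A\times B\subseteq E$. If $|A|=1$, then $|A||B|\le d$. If $|A|\ge2$, then $B$ contains no private column and $A$ contains no dummy row, since all neighbours of a dummy row are private. So $A\times B\subseteq E_\Gamma$, and $|A||B|>d$ forces $|A||B|\ge k$. Conversely, a biclique of $\Gamma$ with at least $k>d$ edges is a rectangle of $E$, because all edges of $\Gamma$ are retained. Here I note that a biclique with at least $k$ edges cannot have $|A|=1$: that would require a vertex of degree at least $k$, which was excluded. This last observation is not actually needed for the converse direction.

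Item~3 is a counting matter. We have $n\le 2|L|+2$, $|\mathcal Y|\le|R|+nd$, and $d<|L||R|$. So the explicit table, with entries $0$ or $1/(nd)$, has size polynomial in the size of $(\Gamma,k)$, and the whole construction runs in polynomial time.

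The step I expect to be the main obstacle is aligning the column-degree constraint ``at most $d$'' with the threshold $d=k-1$. This is exactly where the preliminary filtering of high-degree vertices on both sides is needed. Isolated right vertices have to be removed as well, so that every column has degree at least $1$.
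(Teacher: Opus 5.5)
Your proposal is correct and follows essentially the same route as the paper. Both arguments map trivial instances to the same two fixed $2\times2$ tables, set $d=k-1$ after filtering out vertices of degree at least $k$, and pad with private degree-one columns and power-of-two dummy rows. Both then use the fact that a private column confines a rectangle to a single row. Your deviations are cosmetic: testing $k>|L||R|$ instead of $k>e$, keeping isolated left vertices as rows padded with private columns, and taking $\max\{|L|,2\}$ to force $r\ge1$.
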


\begin{proof}
Let $e=|E_\Gamma|$. We call a vertex in $R$  \emph{private} if it has exactly one
neighbor. We will use the following observation: in a graph whose left
degrees are all equal to $d$, a rectangle containing a private right vertex has
only one left vertex, and therefore at most $d$ edges.
Indeed, suppose a rectangle $A\times B$ contains a private
right vertex $y_0$, whose unique neighbor is $x_0$.
Every vertex of $A$ must be adjacent to $y_0$, so
$A=\{x_0\}$, since $A$ is nonempty.
Moreover, every vertex of $B$ must be adjacent to $x_0$.
As $x_0$ has degree $d$, we obtain
$|A||B|=|B|\le d$.

\smallskip
\noindent\emph{Step 1}
If $k>e$ (in particular, if $e=0$), then $(\Gamma,k)$ is a
no-instance of Maximum Edge Biclique.
In this case, construct $\Gamma'=(\mathcal X,\mathcal Y,E)$ by
setting
\[
\mathcal X=\{x_1,x_2\},\qquad
\mathcal Y=\{y_1,y_2\},\qquad
E=\{(x_1,y_1),(x_2,y_2)\},
\]
where the four vertices are distinct, and put
$d=1$, $r=1$, and $n=2$.

We verify the required conditions.
For condition~1, we have $|\mathcal X|=2=2^r$,
and every left and right vertex has degree $1=d$.

For condition~2, the original graph $\Gamma$ has no biclique
with at least $k$ edges. We show that $E$ likewise has
no rectangle with more than $d$ entries.
Let $A\times B\subseteq E$ be a nonempty rectangle.
The set $A$ cannot contain both $x_1$ and $x_2$,
because these vertices have no common neighbor and
$B$ is nonempty. Hence $A=\{x_i\}$ for some
$i\in\{1,2\}$. Since $y_i$ is the unique neighbor
of $x_i$, necessarily $B=\{y_i\}$. Thus
$|A||B|=1=d$.
Both sides of the equivalence in condition~2 are
therefore false, so the equivalence holds.

Finally, the probability table uniform on $E$ is
\[
p=
\begin{pmatrix}
1/2 & 0\\
0 & 1/2
\end{pmatrix}.
\]
This table and $r=1$ have constant encoding length,
which verifies condition~3.

Suppose now that $k\le e$ and that some vertex of $\Gamma$
has degree at least $k$. Then $(\Gamma,k)$ is a yes-instance
of Maximum Edge Biclique. Indeed, if such a vertex
$x$ belongs to $L$, choose any $k$ of its neighbors
and let $B$ be their set. The sets $A=\{x\}$ and $B$
form a biclique with $|A||B|=k$ edges.
If the vertex belongs to $R$, the same argument applies
with the two sides exchanged.

In this case, construct $\Gamma'=(\mathcal X,\mathcal Y,E)$
by setting
\[
\mathcal X=\{x_1,x_2\},\qquad
\mathcal Y=\{y_1,y_2\},\qquad
E=\mathcal X\times\mathcal Y,
\]
where the four vertices are distinct, and put
$d=2$, $r=1$, and $n=2$.

For condition~1, we have $|\mathcal X|=2=2^r$,
and every left and right vertex has degree $2=d$.

For condition~2, we have already shown that $\Gamma$
contains a biclique with at least $k$ edges.
The constructed graph also contains a rectangle
with more than $d$ entries: take
$A=\mathcal X$ and $B=\mathcal Y$. Then
\[
A\times B=E,\qquad |A||B|=4>2=d.
\]
Both sides of the equivalence in condition~2 are
therefore true, so the equivalence holds.

Finally, the probability table uniform on $E$ is
\[
p=
\begin{pmatrix}
1/4 & 1/4\\
1/4 & 1/4
\end{pmatrix}.
\]
This table and $r=1$ have constant encoding length,
which verifies condition~3.

\medskip
In every remaining case, the following conditions hold
\begin{equation}
2\le k\le e,\qquad \deg_\Gamma(v)\le k-1\quad\text{for every vertex }v.
\label{eq:nontrivial}
\end{equation}
Indeed, if $k=1$ and $e>0$, the preceding yes-case applies.

\smallskip
\noindent\emph{Step 2}
Remove isolated vertices from $\Gamma$; in fact, none can belong to a biclique with both sides
nonempty. Let $L_0,R_0$ be the remaining vertex sets, put $n_0=|L_0|$,
and set $d=k-1$. Each surviving vertex has degree between $1$ and $d$.
Since $\sum_{x\in L_0}\deg_\Gamma(x)=e$, we have $n_0\le e$.
Moreover, we $n_0\ge2$: otherwise, i.e., if there were only one left vertex, its degree would be
$e\ge k$, contrary to~\eqref{eq:nontrivial}.

 Let $n=2^r$ be the smallest power of two at least $n_0$.
The integers $n$ and $r$ are computable directly from
the binary representation of $n_0$ in polynomial time.
By minimality,
\begin{equation}
2^{r-1}<n_0\le 2^r,
\qquad\text{hence}\qquad
n_0\le n<2n_0,\quad r\ge1.
\label{eq:rounding}
\end{equation}
Construct the final graph by the following two operations:
\begin{itemize}
\item For each $x\in L_0$, add $d-\deg_\Gamma(x)$ new right vertices,
each adjacent only to $x$.
\item Add $n-n_0$ new left vertices. For each such new vertex $\alpha$,
create 
$d$ new right neighbors,
each adjacent \textit{only} to that left vertex $\alpha$.
\end{itemize}
All introduced vertices are distinct, and no other edges are added.
Write $\mathcal X,\mathcal Y$ for the final vertex sets and $E$ for the final
edge set. Every left degree is now $d$. An original right degree is unchanged
and lies between $1$ and $d$; every new right degree is one. 

This proves
the first assertion  of the lemma. 
We verify both implications of the second assertion directly in the final
graph. 

Suppose first that $\Gamma$ contains a biclique $A\times B$
with at least $k$ edges. None of its vertices is isolated,
so all survive the deletion of isolated vertices.
All its edges are retained in $\Gamma'$, and therefore
$A\times B\subseteq E$ with $|A||B|\ge k=d+1>d$.

Conversely, suppose $A\times B\subseteq E$ and
$|A||B|>d$. We show that this rectangle uses only
original vertices and edges.

Firstly, let us recall that 
every newly added right vertex is \textit{private}, that is, it has exactly one left neighbor. 
If $B$ contained a newly added right vertex $y_0$,
let $x_0$ be its unique neighbor. Since every vertex
of $A$ must be adjacent to $y_0$, we would have
$A=\{x_0\}$. Then $B$ would be contained in the
$d$-element neighborhood of $x_0$, giving
$|A||B|=|B|\le d$, a contradiction.
Hence $B\subseteq R_0$. 

Secondly, every newly added left vertex is adjacent only to
newly added right vertices, so it has no neighbor in
the nonempty set $B$. Since every vertex of $A$
must be adjacent to every vertex of $B$, no newly
added left vertex can belong to $A$. Thus $A\subseteq L_0$.
Finally, the construction adds no edges between
$L_0$ and $R_0$, so
\[
A\times B\subseteq E\cap(L_0\times R_0)=E_\Gamma.
\]
This is a biclique of the original graph with
$|A||B|>d=k-1$, hence with at least $k$ edges.

\smallskip
\noindent\emph{Step 3.}
Now, we estimate  the size of the constructed instance and the running time of the algorithm.

Let $\ell$ be the total bit length of $(\Gamma,k)$. The graph is given explicitly,
so $e=O(\ell)$. In the nontrivial case,
\[
n<2n_0\le2e,\qquad 1\le d=k-1<e,\qquad N=|E|=nd\le2e^2.
\]
Every right vertex has positive degree, and the sum of the right degrees is
$N$. Hence $c:=|\mathcal Y|\le N$. The complete probability table, including
zeros, has at most
\[
nc\le nN\le4e^3
\]
cells. Its entries are $0$ or $1/N$, each requiring $O(\log(e+1))$ bits;
therefore its total encoding length is $O(e^3\log(e+1))$.
Also $r=\log_2 n\le\log_2(2e)$, so $r$ has polynomial encoding length.

Degrees and isolated vertices are found by scanning the input. The
construction adds polynomially many vertices and edges, which may be
relabeled consecutively, and writes polynomially many table entries.
By~\eqref{eq:rounding}, the power-of-two padding increases
the row count by less than a factor of two, and
$n<2n_0\le2e$. Thus the requirement $n=2^r$ preserves
the polynomial bound on the output size.
Finally, the test $k>e$ occurs before any loop whose length depends on the
value of $k$. All subsequent such loops have $k\le e$, so a large
binary-encoded threshold cannot cause a pseudopolynomial running time.
This proves the third assertion and the claimed polynomial-time construction.
\end{proof}

\begin{proof}[Proof of Theorem~\ref{thm:J-main}]
Given $(\Gamma,k)$, apply Lemma~\ref{lem:reduction}, form the uniform table $p$ on
$E$, and output $(p,r)$. The output satisfies all the stated restrictions,
and $H(X)=\log n=r$ exactly. By Lemma~\ref{lem:reduction} and
Corollary~\ref{cor:criterion},
\[
\begin{split}
\Gamma\text{ has a biclique with at least }k\text{ edges}
&\ \Longleftrightarrow\ E\text{ has a rectangle larger than }d\\
&\ \Longleftrightarrow\ J(p)<r.
\end{split}
\]
 The construction is polynomial-time and  proves NP-hardness by a many-one
reduction.

\medskip
For NP membership on the restricted class, treat it as a language on all
encodings and reject those failing its restrictions. First check that the
input specifies a probability table and a nonnegative integer $r$.
The remaining checks are also polynomial-time: count rows, check that their number is a power of
two whose exponent equals $r$, count positive entries in each row and column,
and compare each positive rational entry with $1/(nd)$. The exponent is
read from the binary row count; there is no need to construct $2^r$ for an
arbitrarily large supplied $r$.
For an input satisfying the restrictions, a certificate consists of two
nonempty index sets $A,B$. Verify that every entry in $A\times B$ is positive
and that $|A||B|>d$. This certificate has polynomial length, and
Corollary~\ref{cor:criterion} proves that such a certificate exists exactly
when $J(p)<r$. No real-valued certificate or entropy comparison is required.
Thus the restricted language is NP-complete.
\end{proof}

\begin{remark}
\label{rem:strict}
Every constructed instance satisfies $J(p)\le r$, with equality exactly in
the no case. Consequently, replacing $J(p)<r$ by $J(p)\le r$ at this same
threshold would make every constructed instance a yes-instance. Section~\ref{sec:gap} will establish a quantitative gap
between the two cases. This will allow us to replace $r$
by the smaller rational threshold
$\theta=r-1/(8N^2)$, where $N$ is the number of positive
table entries, so that yes-instances satisfy
$J(p)<\theta$ and no-instances satisfy $J(p)>\theta$.
\end{remark}

\Needspace{10\baselineskip}
\section{A quantitative gap and hardness of numerical approximation}
\label{sec:gap}

To prove $J(p)<\log n$, Lemma~\ref{lem:large-rectangle}
needed only an admissible $W_t$ for which $\Phi(t)>0$,
because
\[
J(p)\le I(X,Y;W_t)=\log n-\frac{\Phi(t)}N.
\]
We now choose an explicit rational value of $t$ and
derive a positive lower bound on $\Phi(t)/N$.
This quantifies the separation between yes-instances
and no-instances and allows them to be distinguished
using rational approximations of $J(p)$.

\subsection{An explicit separation from the row entropy}

\begin{proposition}
\label{prop:gap}
Let $p$ be the table~\eqref{eq:uniform}, with $N=nd$. If its support
contains a rectangle with more than $d$ entries, then
\begin{equation}
J(p)<\log n-\frac1{4(d+1)N}
\le\log n-\frac1{4N^2}.
\label{eq:gap}
\end{equation}
If no such rectangle exists, $J(p)=\log n$.
\end{proposition}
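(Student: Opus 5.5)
The second assertion is Lemma~\ref{lem:no-rectangle}, so only the first needs work. The plan is to reuse the auxiliary $W_t$ of Lemma~\ref{lem:large-rectangle} together with the lower bound~\eqref{eq:gain-bound}, and to choose $t$ explicitly.

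Fix a rectangle $A\times B\subseteq E$ with $m=|A||B|>d$. Since $m$ and $d$ are integers, $m\ge d+1$. The first step is a lower bound on $\log(m/d)$. Using $\ln(1+1/d)\ge 1/(d+1)$, which follows from $\ln u\ge 1-1/u$, we get
\[
\log\frac md\ge\log\Bigl(1+\frac1d\Bigr)\ge\frac{1}{(d+1)\ln 2}.
\]

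Next we choose $t=1/(2(d+1))$. This is rational and lies in $(0,1/2]$, so $W_t$ is admissible. Then
\[
\log\frac md-\frac{t}{\ln2}\ge\frac{1}{\ln2}\Bigl(\frac1{d+1}-\frac1{2(d+1)}\Bigr)=\frac{1}{2(d+1)\ln2},
\]
and~\eqref{eq:gain-bound} gives
\[
\Phi(t)\ge mt\cdot\frac{1}{2(d+1)\ln 2}=\frac{m}{4(d+1)^2\ln2}.
\]
Since $m\ge d+1$, this is at least $\frac{1}{4(d+1)\ln 2}$. Because $1/\ln2>1$, we have strictly $\Phi(t)>\frac1{4(d+1)}$.

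By~\eqref{eq:gain-identity},
\[
J(p)\le I(X,Y;W_t)=\log n-\Phi(t)/N<\log n-\frac{1}{4(d+1)N},
\]
which is the first inequality in~\eqref{eq:gap}. For the second inequality we need $d+1\le N$. This holds when $n\ge2$, since then $N=nd\ge 2d\ge d+1$. In the degenerate case $n=1$ the rectangle $\{x\}\times E_x$ has exactly $d$ entries, so no rectangle larger than $d$ exists and the hypothesis fails. Hence the case $n=1$ never arises.

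The only delicate point is balancing $t$ against $\log(m/d)$ when $m=d+1$, where the gain is smallest. The chosen $t$ is half the threshold $\ln(m/d)$ (in the worst case), which gives the stated constant $1/4$.
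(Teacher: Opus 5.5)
Your proof is correct and follows the paper's argument essentially step for step: the same choice $t=1/(2(d+1))$, the same bound $\log(m/d)\ge 1/((d+1)\ln 2)$ fed into~\eqref{eq:gain-bound}, and the same use of $m\ge d+1$ and $1/\ln 2>1$. The only difference is cosmetic. For $d+1\le N$ the paper simply notes $d+1\le m\le N$, because the rectangle lies in $E$. That makes your case split on $n$ unnecessary, although it is also valid.
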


\begin{proof}
The equality assertion is Lemma~\ref{lem:no-rectangle}.
Otherwise choose a rectangle of size $m>d$ and use the auxiliary and gain
function of Lemma~\ref{lem:large-rectangle}. Since $m,d$ are integers,
$m\ge d+1$. Set
\[
t=\frac1{2(d+1)}\in(0,1).
\]
Applying $\ln u\le u-1$ with $u=d/(d+1)$ gives
\[
\ln\frac{d+1}{d}\ge\frac1{d+1}.
\]
Consequently,
\[
\log\frac md\ge\log\frac{d+1}{d}
\ge\frac1{(d+1)\ln2}=\frac{2t}{\ln2}.
\]
The estimate~\eqref{eq:gain-bound} now yields
\[
\Phi(t)\ge\frac{mt^2}{\ln2}
\ge\frac{(d+1)t^2}{\ln2}
=\frac1{4(d+1)\ln2}
>\frac1{4(d+1)},
\]
where the last inequality uses $0<\ln2<1$.
Since $W_t$ is admissible,~\eqref{eq:gain-identity} implies
\[
J(p)\le I(X,Y;W_t)=\log n-\frac{\Phi(t)}N
<\log n-\frac1{4(d+1)N}.
\]
Finally, the rectangle is contained in $E$, so $d+1\le m\le N$.
This proves the second inequality in~\eqref{eq:gap}.
\end{proof}

The chosen $t$ depends only on $d$ and has $O(\log(d+1))$ encoding length.
For the outputs of Lemma~\ref{lem:reduction}, the resulting dichotomy is
\begin{equation}
\begin{cases}
J(p)<r-1/(4N^2),&\text{if the biclique instance is yes},\\
J(p)=r,&\text{if the biclique instance is no}.
\end{cases}
\label{eq:dichotomy}
\end{equation}
In the nontrivial construction, $N\le2e^2$, so $1/(4N^2)\ge1/(16e^4)$;
the fixed outputs have constant $N$. The separation is therefore
inverse-polynomial in the original input length, as well as in the length
of the explicitly supplied table.

\subsection{Hardness with a finite rational output}
In this section we show the hardness of numerical approximation of the Wyner common information, within a certain threshold.
\begin{theorem}
\label{thm:approximation}
On the class of tables in Theorem~\ref{thm:J-main}, let $N$ be the number of
positive entries. Producing a rational $q$ satisfying
\begin{equation}
|q-J(p)|\le\frac1{16N^2}
\label{eq:required-error}
\end{equation}
is NP-hard under a polynomial-time reduction making one call to a solver
for this approximation task. In particular, a polynomial-time algorithm
guaranteed to return such a $q$ would imply $\PT=\NP$.

Consequently, unless $\PT=\NP$, no algorithm can, for every such $p$ and
every positive integer $s$ given in unary, return a rational $q$ with
$|q-J(p)|\le2^{-s}$ in time polynomial in the table length and $s$.
The obstruction already occurs for $s=O(\log N)$.
\end{theorem}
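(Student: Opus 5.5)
The plan is to reduce the decision problem of Theorem~\ref{thm:J-main} to the approximation task with a single oracle call, using the dichotomy~\eqref{eq:dichotomy}. Given an instance $(\Gamma,k)$ of Maximum Edge Biclique, apply Lemma~\ref{lem:reduction} to obtain, in polynomial time, the uniform table $p$ on $E$ and the integer $r=\log n$. Count the positive entries $N=nd$. Call the approximation solver once on $p$ to obtain a rational $q$ with $|q-J(p)|\le 1/(16N^2)$. Then compare $q$ with the rational threshold $\theta=r-1/(8N^2)$: answer yes if $q<\theta$ and no otherwise. This comparison is exact rational arithmetic on numbers of polynomial length, provided the solver's output length counts against its running time.

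Correctness follows from Proposition~\ref{prop:gap}. In a yes-instance $J(p)<r-1/(4N^2)$, so
\[
q\le J(p)+\frac{1}{16N^2}<r-\frac{3}{16N^2}<\theta .
\]
In a no-instance $J(p)=r$, so
\[
q\ge r-\frac{1}{16N^2}>\theta .
\]
The two cases are therefore separated, and the answer returned is correct. Hence a polynomial-time solver for the approximation task would decide Maximum Edge Biclique in polynomial time, which by Peeters' theorem gives $\PT=\NP$.

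For the consequence about $2^{-s}$-approximations, choose $s=\lceil\log_2(16N^2)\rceil=O(\log N)$. Then $2^{-s}\le 1/(16N^2)$, and $s$ written in unary has length $O(\log N)$, polynomial in the table length. An algorithm that runs in time polynomial in the table length and in $s$ would therefore solve the approximation task in polynomial time, contradicting the first part unless $\PT=\NP$.

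The only delicate point is bookkeeping rather than mathematics. The tables produced by Lemma~\ref{lem:reduction} must belong to the restricted class of Theorem~\ref{thm:J-main}, which that lemma guarantees. We must also check that $N$, $\theta$ and the comparison with $q$ are polynomial in the input size; this holds because $N\le 2e^2$ in the nontrivial case and $N$ is constant in the fixed outputs.
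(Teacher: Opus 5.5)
Your proposal is correct and matches the paper's proof: the same single oracle call on the table from Lemma~\ref{lem:reduction}, the same threshold $\theta=r-1/(8N^2)$, the same two-sided estimate from the dichotomy~\eqref{eq:dichotomy}, and the same observation that a polynomial-time solver outputs a $q$ of polynomial length. Your choice $s=\lceil\log_2(16N^2)\rceil$ and the paper's choice of $s$ as the number of binary digits of $16N^2$ differ only trivially, since both give $2^{-s}\le 1/(16N^2)$ with $s=O(\log N)$.
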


\begin{proof}
Given $(\Gamma,k)$, construct $(p,r)$ by Lemma~\ref{lem:reduction}, compute $N=nd$,
and obtain any rational $q$ satisfying~\eqref{eq:required-error}. Compute
\begin{equation}
\theta=r-\frac1{8N^2}.
\label{eq:theta}
\end{equation}
On a yes-instance,~\eqref{eq:dichotomy} gives
\[
q\le J(p)+\frac1{16N^2}<r-\frac3{16N^2}<\theta.
\]
On a no-instance it gives
\[
q\ge J(p)-\frac1{16N^2}=r-\frac1{16N^2}>\theta.
\]
Thus the original instance is yes exactly when $q<\theta$, for every
allowed output of the approximation solver. If the solver runs in polynomial
time, $q$ has polynomial encoding length; comparison with $\theta$ uses
integer cross-multiplication and takes polynomial time. This proves the
first assertion.

For the precision formulation, let $s$ be the number of binary digits of
the positive integer $16N^2$. Then
\[
2^{-s}<\frac1{16N^2},\qquad s=O(\log N).
\]
This choice uses only integer operations. Since $N$ is polynomially bounded
in the original input length, the unary encoding of $s$ also has polynomial
length. An algorithm meeting the stated precision specification would
therefore supply a $q$ satisfying~\eqref{eq:required-error}, contradicting
the first assertion unless $\PT=\NP$.
\end{proof}

\begin{remark}
Braun, Jain, Lee and Pokutta~\cite{BraunJainLeePokutta2017} showed that
Wyner Common Information
$J$ characterizes the logarithm of the nonnegative rank amortized under
tensor powers and small $\ell_1$ perturbations. Combined with their
results, Theorems~\ref{thm:J-main} and~\ref{thm:approximation} imply that
this amortized nonnegative rank is also hard to compute and to
approximate. A precise statement requires fixing the order of the limits
involved in its definition, and we do not pursue it here.
\end{remark}

\section{Hardness of the common entropy}\label{sec:x3c}

Sections~\ref{sec:x3c}--\ref{sec:G-proof} are devoted to the common entropy
and to the proof of Theorem~\ref{thm:G-main}. Several letters, such as $n$,
$N$, $m$, $t$, $a$, $b$, $k$, $q$ and $s$, are used in these sections with
meanings fixed below, unrelated to those they have in
Sections~\ref{sec:rectangles}--\ref{sec:gap}.

\subsection{Outline of the proof of Theorem~\ref{thm:G-main}}\label{subsec:overview}

The reduction is from exact cover by $3$-sets and proceeds in five steps.

\begin{enumerate}[label=\textup{(\arabic*)},leftmargin=*,widest=5]
\item \emph{Decompositions.} Admissible auxiliaries are the same as decompositions
$p=\sum_w F_w$ into finitely many nonnegative matrices of rank one; the distribution of $W$ is the
vector of masses of the matrices $F_w$ (Lemma~\ref{lem:decomp}). A nonnegative rank-one matrix has
rectangular support.

\item \emph{Construction.} Given $3$-element sets $S_1,\dots,S_m$ in a ground set $U$ with
$\lvert U\rvert=N=3t$, let $M$ be the matrix whose rows are indexed by $U$ and whose columns are an
extra column with all entries $\eps=1/N$, followed by the incidence vectors of $S_1,\dots,S_m$; put
$p=M/Z$ with $Z=3m+1$. We use families in which every element lies in at most three sets and two
sets share at most one element; exact cover remains $\NP$-hard for such families
(Section~\ref{subsec:restricted}).

\item \emph{A universal majorant.} Rectangularity and the two restrictions imply that every
component of a decomposition of $M$ has mass at most $3+3\eps$, of which at most $3$ lies outside
the extra column. As the extra column has total mass $N\eps=1$, the normalized masses of every
decomposition are majorized by the two-level probability vector $\qs$ with $t$ entries
$(3+3\eps)/Z$ and $m-t$ entries $3/Z$ (Lemma~\ref{lem:majorization}). An exact cover yields a
decomposition whose normalized masses are exactly $\qs$ (Lemma~\ref{lem:cover}). By the Schur
concavity of entropy, $G(p)=H(\qs)$ whenever an exact cover exists.

\item \emph{A gap.} If there is no exact cover, the total normalized mass of the $t$ heaviest
components falls short of the sum of the $t$ largest entries of $\qs$ by at least $1/(NZ)$
(Lemma~\ref{lem:deficit}). Indeed, a smaller shortfall would force each of these components to live
on the rows of a single set $S_j$ and, together, to carry more than $1-\eps$ of the extra column;
this is possible only if their $t$ sets form an exact cover. A quantitative form of the Schur
concavity of entropy (Lemma~\ref{lem:entropy}) turns this shortfall into the entropy gap
$G(p)\ge H(\qs)+\Delta$ with $\Delta=1/(4N^2Z)$ (Theorem~\ref{thm:gap}).

\item \emph{A rational threshold.} Since $1/\Delta$ is polynomial in the size of the instance, a
rational number $r$ with $H(\qs)\le r<H(\qs)+\Delta$ can be computed in polynomial time by exact
integer arithmetic (Section~\ref{sec:G-proof}). Then $G(p)\le r$ if and only if an exact cover exists.
\end{enumerate}
Step~(1) is Lemma~\ref{lem:decomp}; the rest of this section and
Sections~\ref{sec:matrix}--\ref{sec:G-proof} carry out steps (2)--(5).

The structure of this argument should be compared with that of
Sections~\ref{sec:rectangles}--\ref{sec:gap}. There, a single large rectangle
in the support of $p$ suffices to improve on the auxiliary $W=X$, and the
source problem is Maximum Edge Biclique. Here, a decomposition of small entropy
must concentrate almost all the mass of $p$ on few heavy rectangles, which
forces a whole family of rectangles to fit together exactly; this is why the
source problem is an exact cover problem.

\subsection{Restricted pairs}\label{subsec:restricted}

\begin{definition}\label{def:restricted}
Let $U$ be a finite set and $\cS=(S_1,\dots,S_m)$ an indexed family of subsets of $U$. An
\emph{exact cover} is a set $I\subseteq[m]$ such that the sets $S_j$, $j\in I$, are pairwise
disjoint and their union is $U$. The pair $(U,\cS)$ is \emph{restricted} if
\begin{enumerate}[label=\textup{(R\arabic*)}]
\item\label{R1} $\lvert S_j\rvert=3$ for every $j\in[m]$;
\item\label{R2} every element of $U$ belongs to at most three of the sets $S_j$;
\item\label{R3} $\lvert S_j\cap S_k\rvert\le1$ whenever $j\ne k$;
\item\label{R4} $\lvert U\rvert=3t$ for an integer $t$ with $1\le t<m$.
\end{enumerate}
\end{definition}

Under \ref{R1} and \ref{R4}, every exact cover consists of exactly $t$ sets. The source of our
reduction is the following problem.

\begin{definition}\label{def:cx}
\CX: given a finite set $U$ and an indexed family $(T_1,\dots,T_n)$ of $3$-element subsets of $U$
such that every element of $U$ belongs to exactly three of the sets $T_j$, decide whether an exact
cover exists.
\end{definition}

Counting incidences gives $3\lvert U\rvert=3n$, so $\lvert U\rvert=n$ for every instance of \CX.

\begin{theorem}[Moore and Robson \cite{MooreRobson2001}]\label{thm:mr}
\CX\ is $\NP$-complete.
\end{theorem}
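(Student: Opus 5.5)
Membership in $\NP$ is immediate: a certificate is a set $I\subseteq[n]$, and one checks in linear time that the sets $T_j$, $j\in I$, are pairwise disjoint and cover $U$. All the work is in the hardness, which I would obtain by a many-one reduction from the version of exact cover by $3$-sets in which every element lies in \emph{at most} three sets. That version is $\NP$-complete by the classical Garey--Johnson reduction from $3$-dimensional matching. Given such an instance $(U_0,(S_1,\dots,S_{m_0}))$, elements lying in no set make it a trivial no-instance and are handled separately. Every remaining element $u$ has a \emph{deficiency} $\delta(u)=3-\deg(u)\in\{0,1,2\}$. The plan is to raise every deficiency to zero by attaching gadgets made of new elements and new $3$-sets, in such a way that no new set meeting $U_0$ can ever be used in an exact cover, and that the gadget elements can always be covered internally.

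The key steps, in order, are the following.

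\textbf{Step 1 (pairing up the missing incidences).} Pair the missing incidences of the elements of $U_0$. Each external set will have the form $\{u,v,a\}$, where $u$ and $v$ are deficient elements of $U_0$ (distinct and not yet in a common set, so the family stays simple) and $a$ is a fresh gadget element. If the total deficiency $\sum_u\delta(u)$ is odd, I first add a disjoint, self-contained cubic block of new elements. It must have an exact cover and contain one deficient element, which fixes the parity.

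\textbf{Step 2 (the gadget).} Each external set $\{u,v,a\}$ is given its own gadget $V_a\ni a$. This is a family of internal $3$-sets on $V_a$ in which $a$ lies in exactly two internal sets and every other element of $V_a$ lies in exactly three. Incidence counting gives $3\lvert V_a\rvert-1=3\cdot\#\{\text{internal sets}\}$, which is impossible. So gadgets must instead be shared by a number of external sets that is divisible by $3$: three external sets entering $V$ at distinct elements $a_1,a_2,a_3$. Then the required degree condition is consistent.

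\textbf{Step 3 (correctness of the gadget).} The gadget must have two properties. First, $V$ must have an internal exact cover, so that covers of the original instance extend. Second, $V\setminus A$ must have \emph{no} internal exact cover for every nonempty $A\subseteq\{a_1,a_2,a_3\}$, so that no external set is ever used. For $\lvert A\rvert\in\{1,2\}$ this is automatic by counting modulo $3$, since $\lvert V\rvert\equiv0$. For $A=\{a_1,a_2,a_3\}$ it must be forced structurally. My first choice is to take $V$ so that the internal sets through $a_1,a_2,a_3$ pairwise meet in a way that makes every internal cover use exactly one set through each $a_i$ and share elements otherwise. Concretely, I would try a small explicit configuration of about a dozen elements and check it by hand.

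Given a valid gadget, correctness follows. An exact cover of the new instance uses no external set, so it restricts to an exact cover of $U_0$. Conversely, a cover of $U_0$ extends by the internal covers of the gadgets. The construction is clearly polynomial.

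\textbf{Main obstacle.} I expect the hard part to be the explicit design and verification of this gadget. The construction must simultaneously satisfy two requirements. It must be cubic apart from the three ports. It must also rule out the residual case $\lvert A\rvert=3$, which is \emph{not} excluded by divisibility. If a hand-built gadget proves awkward, the fallback is to follow Moore and Robson~\cite{MooreRobson2001} directly, reducing from a cubic planar variant of satisfiability with their variable and clause gadgets, which are cubic by design. This is presumably why the paper simply invokes their theorem.
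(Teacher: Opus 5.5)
\paragraph{Assessment.} Your route differs from the paper's, and as written it has two genuine gaps, both easily repaired.

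\paragraph{Comparison with the paper.} The paper constructs nothing for this theorem. It invokes Moore and Robson's result for Cubic Planar Monotone 1-in-3 Satisfiability. It then observes that, with the clauses as ground set and, for each variable, the set of the three clauses containing it, that problem is a special case of \CX. This is your fallback.

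\paragraph{The two gaps.} First, the hardness direction rests on a gadget you never exhibit. Step~3 only says what you ``would try'', so the claim that no external set can occur in an exact cover is unproved.

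Second, the parity block of Step~1 cannot exist as described. Suppose a self-contained block $B$ has all internal degrees equal to $3$ except one element of deficiency $\delta\in\{1,2\}$. Then $B$ has $3\lvert B\rvert-\delta$ incidences, which is not a multiple of $3$; this is exactly the count you use in Step~2. Every self-contained block has total deficiency divisible by $3$. Changing parity therefore requires deficiency $3$, for instance three elements of deficiency one.

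You should also record that $D=\sum_u\delta(u)=3\lvert U_0\rvert-3m_0$ is a multiple of $3$. Then, once $D$ is even, the $D/2$ external sets can be grouped three per gadget.

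\paragraph{A repair.} One six-element structure closes both gaps. On $V=\{a_1,a_2,a_3,x_1,x_2,x_3\}$ take the five sets
\[
\{a_1,x_1,x_2\},\quad\{a_2,a_3,x_3\},\quad\{a_1,x_1,x_3\},\quad\{a_2,x_1,x_2\},\quad\{a_3,x_2,x_3\}.
\]
It has the required properties:
\begin{itemize}
\item the $a_i$ have degree $2$ and the $x_i$ have degree $3$;
\item the first two sets partition $V$;
\item every set contains some $a_i$, so $\{x_1,x_2,x_3\}$ has no internal exact cover.
\end{itemize}
Together with your mod-$3$ argument for $\lvert A\rvert\in\{1,2\}$, this is the required gadget, and no external set is ever used.

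The same structure also serves as the parity block. Regard $a_1,a_2,a_3$ as deficient elements to be paired like those of $U_0$; the block then has deficiency $3$ and an exact cover. After this correction $D\equiv0\pmod 6$, so $D\ge6$ whenever $D>0$. Since each element contributes at most two half-incidences, they can be paired with $u\ne v$. The condition that $u,v$ share no set is unnecessary, since \CX\ is defined for indexed families. Inputs with an element in no set go to a fixed no-instance, such as the four $3$-subsets of a $4$-element set.

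\paragraph{What each route buys.} With these additions your reduction proves the theorem from the bounded-occurrence X3C of Garey and Johnson, and is self-contained. The paper's citation is shorter and also gives planarity, which the paper does not need.
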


Moore and Robson \cite[Section~3]{MooreRobson2001} state this theorem for \textsc{Cubic Planar Monotone 1-in-3
Satisfiability}, which is the planar special case of \CX\ in the language of formulas: the clauses
form the ground set, each variable $x$ gives the set $T_x$ of the three clauses containing it (the
variable--clause incidence graph is cubic, so every clause has three distinct variables and every
variable lies in three distinct clauses), and a truth assignment makes exactly one variable of every
clause true if and only if its true variables index an exact cover; membership in $\NP$ is clear.
Of the regularity of \CX, our
reduction uses only that every element lies in at most three sets, which is used to
verify~\ref{R2}, and that $\lvert U\rvert=n$, which lets us express all the sizes
in~\eqref{eq:sizes} below in terms of $n$ alone.

\begin{figure}[hbt]
\centering
\begin{tikzpicture}[x=1cm,y=0.95cm,
    elt/.style={circle,fill=black,inner sep=1.4pt},
    lab/.style={font=\small}]
  \foreach \i/\y in {1/0,2/-1,3/-2} {
    \node[elt] (t\i) at (0,\y) {};
    \node[lab,left=3pt] at (t\i) {$u_{j,\i}$};
  }
  \draw[rounded corners=7pt,thick] (-1.05,-2.42) rectangle (0.38,0.42);
  \node[lab,above] at (-0.33,0.42) {$T'_j$};
  \draw[->,thick] (1.0,-1) -- node[above,lab] {replace} (2.6,-1);
  \foreach \i/\y in {1/0,2/-1,3/-2} {
    \node[elt] (u\i) at (4.2,\y) {};
    \node[lab,left=3pt] at (u\i) {$u_{j,\i}$};
    \node[elt] at (5.6,\y) {};
    \node[elt] at (6.8,\y) {};
    \draw[rounded corners=7pt,thick,blue!70!black] (3.15,\y-0.3) rectangle (7.3,\y+0.3);
    \node[lab,blue!70!black,right] at (7.35,\y) {$A_{j,\i}$};
  }
  \foreach \k/\x in {1/5.6,2/6.8} {
    \draw[rounded corners=7pt,thick,dashed,red!70!black] (\x-0.4,-2.45) rectangle (\x+0.4,0.45);
    \node[lab,red!70!black,below] at (\x,-2.45) {$B_{j,\k}$};
  }
\end{tikzpicture}
\caption{The gadget of Lemma~\ref{lem:restricted}. The triple $T'_j$ is replaced by three rows
$A_{j,i}$ (solid) and two columns $B_{j,k}$ (dashed) of a $3\times2$ grid of new elements
$c_{j,i,k}$; the row $A_{j,i}$ also contains the original element $u_{j,i}$. Since every row meets
every column, an exact cover uses either all three rows, and thereby covers $T'_j$, or both
columns, and then covers no original element.}
\label{fig:gadget}
\end{figure}

The next lemma enforces the intersection condition \ref{R3} with a simple gadget, shown in
Figure~\ref{fig:gadget}.

\begin{lemma}\label{lem:restricted}
There is a polynomial-time algorithm that maps every instance of \CX\ with $n\ge1$ sets to a
restricted pair $(U,\cS)$ with
\begin{equation}\label{eq:sizes}
  \lvert U\rvert=N=21n,\qquad t=7n,\qquad m=15n,
\end{equation}
such that $(U,\cS)$ has an exact cover if and only if the given instance has one.
\end{lemma}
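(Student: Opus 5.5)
The plan is to apply the gadget of Figure~\ref{fig:gadget} to every set of the \CX\ instance and to check \ref{R1}--\ref{R4}, the sizes~\eqref{eq:sizes}, and the preservation of exact covers.

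\emph{Construction.} Let $(U_0,(T_1,\dots,T_n))$ be the \CX\ instance, so $\lvert U_0\rvert=n$. Write each triple as $T_j=\{u_{j,1},u_{j,2},u_{j,3}\}$ in a fixed order. For each $j$, introduce six new elements $c_{j,i,k}$ with $i\in[3]$ and $k\in[2]$. Form the five sets
\[
A_{j,i}=\{u_{j,i},c_{j,i,1},c_{j,i,2}\}\quad(i\in[3]),\qquad B_{j,k}=\{c_{j,1,k},c_{j,2,k},c_{j,3,k}\}\quad(k\in[2]).
\]
Take $U=U_0\cup\{c_{j,i,k}\}$. Then $\lvert U\rvert=n+6n=7n$ elements.

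This does not match the sizes~\eqref{eq:sizes}. With $\lvert U\rvert=N=21n$ and $t=7n$, one has $N=3t$, so the intended ground set has $21n$ elements. I therefore expect that the paper's gadget uses more new elements than the figure suggests. Alternatively, the construction is tripled by taking three disjoint copies of the instance: this gives $\lvert U\rvert=21n$ and $15n$ sets, and exact covers are preserved because the copies are independent. I would settle on whichever variant matches $m=15n$. Five sets per triple times $n$ triples gives $5n$ sets, and three copies give $15n$, consistent with~\eqref{eq:sizes}. So I will take three disjoint copies of the gadget instance. Then $N=21n$, $m=15n$ and $t=7n$, and $t<m$ holds because $n\ge1$.

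\emph{Restriction checks.} \ref{R1} is immediate, and \ref{R4} is the size count above.

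For \ref{R2}, take an element of $U$. An original element $u$ lies in the sets $A_{j,i}$ with $u=u_{j,i}$, one for each triple containing it, hence in at most three sets. A new element $c_{j,i,k}$ lies only in $A_{j,i}$ and $B_{j,k}$.

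For \ref{R3}, I check each kind of pair. Two sets from different gadgets meet only in original elements. The sets $A$ contain one original element each and the sets $B$ contain none, so such a pair meets in at most one element. Within a gadget, $A_{j,i}\cap A_{j,i'}=\emptyset$ for $i\ne i'$, because the $u_{j,i}$ are distinct. Also $B_{j,1}\cap B_{j,2}=\emptyset$, and $A_{j,i}\cap B_{j,k}=\{c_{j,i,k}\}$.

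\emph{Equivalence of exact covers.} This is the main step. Fix one copy and one gadget $j$. The six elements $c_{j,i,k}$ must each be covered exactly once, and they lie only in the gadget's own sets. If some $A_{j,i}$ is chosen, it covers $c_{j,i,1}$ and $c_{j,i,2}$, so neither $B$ can be chosen, since each $B_{j,k}$ would meet $A_{j,i}$ at $c_{j,i,k}$. Then all three $A_{j,i}$ are needed, and together they cover exactly $T_j$ among the original elements. Otherwise both $B_{j,k}$ are chosen and no original element is covered by gadget $j$. Thus exact covers of one copy correspond bijectively to choices $I\subseteq[n]$ of ``active'' gadgets whose triples partition $U_0$, that is, to exact covers of the \CX\ instance. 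An exact cover of the three-copy instance is exactly one exact cover in each copy.

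Polynomial time is clear, since the construction writes $O(n)$ sets of size three. The main obstacle is only the bookkeeping to match the precise constants in~\eqref{eq:sizes}. If the paper's gadget differs, the same local-forcing argument applies to it.
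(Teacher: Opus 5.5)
Your proof is correct and is essentially the paper's. The paper first takes three disjoint copies of the \CX\ instance and then applies the $3\times2$ gadget to each of the $3n$ triples, while you apply the gadget first and then triple. Both orders give the same restricted pair up to relabeling, and you use the same checks of \ref{R1}--\ref{R4} and the same forcing argument (each gadget contributes either all three rows or both columns). For the write-up, drop the speculative detour about the gadget needing more elements. Also state the converse direction explicitly (rows of the active gadgets, columns of the others) instead of folding it into ``correspond bijectively.''
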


\begin{proof}
Let $(U_0,(T_1,\dots,T_n))$ be the given instance of \CX, so $\lvert U_0\rvert=n$.

\emph{Three copies.} The disjoint union of three copies of the given instance is an instance of
\CX\ with a ground set $U'$ of size $3n$ and sets $T'_1,\dots,T'_{3n}$. It has an exact cover if
and only if the original instance has one, since an exact cover of the union restricts to an exact
cover of each copy, and exact covers of the copies combine to one of the union. (This step only
serves to make $\lvert U\rvert$ divisible by three in \eqref{eq:sizes}.)

\emph{Gadgets.} For each $j\in[3n]$ write $T'_j=\{u_{j,1},u_{j,2},u_{j,3}\}$, introduce six new
elements $c_{j,i,k}$ ($i\in[3]$, $k\in[2]$), the cells of a $3\times2$ grid, and replace $T'_j$ by
the three \emph{rows} and the two \emph{columns} of this grid, each row enlarged by one element of
$T'_j$:
\[
  A_{j,i}=\{u_{j,i},\,c_{j,i,1},\,c_{j,i,2}\}\quad(i\in[3]),\qquad
  B_{j,k}=\{c_{j,1,k},\,c_{j,2,k},\,c_{j,3,k}\}\quad(k\in[2]).
\]
Let $U$ consist of $U'$ and the $18n$ new elements, and let $\cS$ be the family of all $15n$ rows
and columns. Then \eqref{eq:sizes} holds, and the construction takes polynomial time.

\emph{Restrictions.} Property \ref{R1} is clear, and \ref{R4} holds because $7n<15n$. An element
$u\in U'$ belongs to the row $A_{j,i}$ exactly when $u=u_{j,i}$; so it belongs to one row of each
of the three gadgets $j$ with $u\in T'_j$, and to no other set. A cell $c_{j,i,k}$ belongs only to
$A_{j,i}$ and $B_{j,k}$. This proves \ref{R2}. As for \ref{R3}, two sets from different gadgets can
only share elements of $U'$, and each set contains at most one element of $U'$; within a gadget,
the rows are pairwise disjoint, the two columns are disjoint, and
$A_{j,i}\cap B_{j,k}=\{c_{j,i,k}\}$.

\emph{Equivalence.} Let $I$ be an exact cover of $U$ and fix $j$. Every row of gadget $j$ meets
every column of gadget $j$, so $I$ cannot contain both a row and a column of this gadget. If $I$
contains a row, it contains no column, and each cell $c_{j,i,1}$ must be covered by its row; thus
all three rows belong to $I$. If $I$ contains no row, the cells $c_{j,1,1}$ and $c_{j,1,2}$ must be
covered by the columns; thus both columns belong to $I$.

Let $J$ be the set of indices $j$ such that the rows of gadget $j$ belong to $I$. Each $u\in U'$ is
covered by exactly one member of $I$, and the members of $\cS$ containing $u$ are the rows
$A_{j,i}$ with $u_{j,i}=u$. Hence $u$ belongs to $T'_j$ for exactly one $j\in J$; that is, $J$ is an
exact cover of $U'$. Conversely, if $J$ is an exact cover of $U'$, choose the rows of every gadget
$j\in J$ and the columns of every gadget $j\notin J$. Every cell is covered exactly once, by its
row or by its column, and every $u\in U'$ is covered exactly once, by a row of the unique gadget
$j\in J$ with $u\in T'_j$. This is an exact cover of $U$.
\end{proof}

\section{The reduction matrix and its decompositions}\label{sec:matrix}

Throughout this section, $(U,\cS)$ is a restricted pair with $\lvert U\rvert=N=3t$ and
$\cS=(S_1,\dots,S_m)$. Rows of matrices are indexed by $U$ and columns by $\{0,1,\dots,m\}$; column
$0$ is the \emph{extra column}, and columns $1,\dots,m$ are the \emph{set columns}.

\subsection{The construction}

Put
\begin{equation}\label{eq:eps-Z}
  \eps=\frac1N,\qquad Z=3m+1,
\end{equation}
and define the matrices $M$ and $p$ by
\begin{equation}\label{eq:M}
  M(u,0)=\eps,\qquad M(u,j)=\begin{cases}1,&u\in S_j,\\0,&u\notin S_j,\end{cases}\quad(j\in[m]),
  \qquad p=\frac{M}{Z}.
\end{equation}
The set columns contain $3m$ ones and the extra column has mass $N\eps=1$, so $\mu(M)=Z$ and $p$
is a rational probability table. By \ref{R2}, every row of $p$ has at most four nonzero entries,
and every nonzero entry equals $1/Z$ or $1/(NZ)$.

Next, let
\begin{equation}\label{eq:qstar}
  a=\frac{3+3\eps}{Z},\qquad b=\frac3Z,\qquad
  \qs=(\underbrace{a,\dots,a}_{t},\underbrace{b,\dots,b}_{m-t}).
\end{equation}
Since $3t\eps=1$, we have $ta+(m-t)b=(3m+3t\eps)/Z=1$, so $\qs$ is a probability vector; moreover
$a>b>0$, and $1\le t<m$ by \ref{R4}. Using $ta=(N+1)/Z$ and $a/b=1+1/N$, we find
\begin{equation}\label{eq:h0}
  h_0:=H(\qs)=ta\log\frac1a+(1-ta)\log\frac1b=\log\frac Z3-\frac{N+1}{Z}\log\Bigl(1+\frac1N\Bigr).
\end{equation}
The parameters are summarized in Table~\ref{tab:parameters}.

The construction rests on the following idea. A rank-one component of mass larger than
$3+\eps$ uses exactly one set column $j$, and only rows of $S_j$; its mass is therefore at most the
$3$ units of column $j$ plus $3\eps$ units taken from the extra column. The extra column has total
mass $1=3t\eps$, just enough to give $t$ components the maximal mass $3+3\eps$, and this is possible
only if their sets form an exact cover. Lemmas~\ref{lem:component}--\ref{lem:deficit} below make
this precise.

\begin{table}[htbp]
\centering
\begin{tabular}{@{}llll@{}}
\toprule
Symbol & Definition & Value in Lemma~\ref{lem:restricted} & Role\\
\midrule
$N$ & $\lvert U\rvert$ & $21n$ & number of elements; rows of $p$\\
$t$ & $N/3$ & $7n$ & number of sets in an exact cover\\
$m$ & $\lvert\cS\rvert$ & $15n$ & number of sets; set columns of $p$\\
$\eps$ & $1/N$ & $1/(21n)$ & entries of the extra column of $M$\\
$Z$ & $3m+1$ & $45n+1$ & mass of $M$\\
$a$, $b$ & $(3+3\eps)/Z$, $3/Z$ & & the two levels of $\qs$\\
$h_0$ & $H(\qs)$, see \eqref{eq:h0} & & value of $G(p)$ on yes-instances\\
$\Delta$ & $1/(4N^2Z)$ & & entropy gap on no-instances\\
\bottomrule
\end{tabular}
\medskip
\caption{Parameters of the construction. The third column gives their values for the restricted
pairs produced by Lemma~\ref{lem:restricted} from an instance of \CX\ with $n$ sets.}
\label{tab:parameters}
\end{table}

We index the components $F_1,F_2,\dots$ of a rank-one decomposition of $M$ so that
$\mu(F_1)\ge\mu(F_2)\ge\dotsb$, and call
$q=\bigl(\mu(F_1)/Z,\mu(F_2)/Z,\dots\bigr)$ its \emph{normalized mass vector}; it is a probability
vector, listed in nonincreasing order as agreed in Section~\ref{subsec:majorization}, with $q_i=0$
for $i$ larger than the number of components.
Finally, for a matrix $F$ indexed like $M$, we write $\mu_0(F)=\sum_{u\in U}F(u,0)$ for its mass in
the extra column; thus $\mu(F)-\mu_0(F)$ is its mass in the set columns.

\subsection{Rank-one components}

The following lemma is the only place in the gap argument where the incidence restrictions
\ref{R2} and \ref{R3} are used.

\begin{lemma}\label{lem:component}
Let $F$ be a nonzero nonnegative rank-one matrix with $F\le M$. Then
\begin{enumerate}[label=\textup{(\alph*)}]
\item\label{c:a} $\mu(F)\le3+\mu_0(F)$;
\item\label{c:b} $\mu(F)\le3+3\eps$;
\item\label{c:c} if $\mu(F)>3+\eps$, then $\supp F\subseteq S_j\times\{0,j\}$ for some $j\in[m]$.
\end{enumerate}
\end{lemma}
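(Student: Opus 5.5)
Write $F=gh^{\top}$ and put $R=\supp g\subseteq U$ and $C=\supp h\subseteq\{0,1,\dots,m\}$, so that $R\times C\subseteq\supp M$ by \eqref{eq:rectangle} and $F\le M$. Let $C'=C\setminus\{0\}$ be the set columns used by $F$. The proof will be a short case analysis on $|C'|$, using only entrywise domination $F\le M$ together with the incidence restrictions \ref{R1}--\ref{R3}.

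For \ref{c:a} we must show that the mass of $F$ in the set columns is at most $3$. If $C'=\emptyset$ this mass is $0$. If $C'=\{j\}$, then $R\subseteq S_j$ and the mass is at most $\sum_{u\in S_j}M(u,j)=3$. If $|C'|\ge2$, every row in $R$ lies in every $S_j$ with $j\in C'$, so \ref{R3} gives $|R|\le1$. Then $R=\{u\}$, and the set-column mass is at most the number of sets containing $u$, which is at most $3$ by \ref{R2}.

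Part \ref{c:b} follows from \ref{a} once we show $\mu_0(F)\le3\eps$ in the relevant cases. We again split on $C'$. If $C'=\emptyset$, then $\mu(F)=\mu_0(F)\le N\eps=1\le3+3\eps$. If $C'=\{j\}$, then $R\subseteq S_j$, so $\mu_0(F)\le|R|\eps\le3\eps$. If $|C'|\ge2$, then $|R|\le1$, so $\mu_0(F)\le\eps$.

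For \ref{c:c} we use the same three cases and show that only $|C'|=1$ survives the hypothesis. The case $C'=\emptyset$ gives $\mu(F)\le1<3+\eps$. The case $|C'|\ge2$ gives $\mu(F)\le3+\eps$ by \ref{c:a} together with $\mu_0(F)\le\eps$. So $C'=\{j\}$ and $R\subseteq S_j$, which means $\supp F=R\times C\subseteq S_j\times\{0,j\}$.

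The only step with real content is the case $|C'|\ge2$. It needs \ref{R3} to force a single row, and then \ref{R2} to bound that row's set-column mass by $3$; the remaining cases are immediate.
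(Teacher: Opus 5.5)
Your proposal is correct and follows the paper's proof essentially verbatim. Both use the same three-way split on the set columns met by the rectangular support $R\times C$ (none, exactly one, at least two). When two or more set columns occur, \ref{R3} forces $R$ to be a single row and \ref{R2} bounds that row's set-column mass by $3$, so only the one-column case can carry mass above $3+\eps$.
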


\begin{proof}
By \eqref{eq:rectangle}, $\supp F=R\times C$ for some nonempty $R\subseteq U$ and
$C\subseteq\{0,\dots,m\}$, and $\supp F\subseteq\supp M$ because $0\le F\le M$. Let $J=C\cap[m]$ be
the set of set columns used by $F$.

If $J=\emptyset$, then $F$ is supported in the extra column, so $\mu(F)=\mu_0(F)\le\mu_0(M)=1$.

If $J=\{j\}$, then $R\times\{j\}\subseteq\supp F\subseteq\supp M$, that is, $u\in S_j$ for every
$u\in R$. Hence $\supp F\subseteq S_j\times\{0,j\}$, and $F\le M$ gives
$\mu(F)-\mu_0(F)\le\lvert S_j\rvert=3$ and $\mu_0(F)\le3\eps$.

If $J$ contains two distinct indices $j$ and $k$, then $R\subseteq S_j\cap S_k$, so $R=\{u\}$ for a
single element $u$ by \ref{R3}. By \ref{R2}, row $u$ of $M$ contains at most three ones, so
$\mu(F)-\mu_0(F)\le3$, while $\mu_0(F)\le\eps$; thus $\mu(F)\le3+\eps$.

Assertions \ref{c:a} and \ref{c:b} hold in all three cases, and $\mu(F)>3+\eps$ is possible only in
the second case, which proves \ref{c:c}.
\end{proof}

\subsection{The universal majorant and exact covers}

\begin{lemma}\label{lem:majorization}
The normalized mass vector $q$ of every rank-one decomposition of $M$ satisfies $q\preceq\qs$.
\end{lemma}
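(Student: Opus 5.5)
We must show that for every $k\ge1$ the $k$ largest normalized masses satisfy $q_1+\dots+q_k\le\qs_1+\dots+\qs_k$. The bound combines two constraints: each component is capped by Lemma~\ref{lem:component}, and the total mass available in the extra column is fixed. Multiplying by $Z$, it suffices to show that for any $k$ components $F_1,\dots,F_k$ of a rank-one decomposition of $M$,
\[
\sum_{i=1}^k\mu(F_i)\le \sum_{i=1}^k Z\qs_i .
\]
The right-hand side equals $k(3+3\eps)$ for $k\le t$. For $t\le k\le m$ it equals $3k+3t\eps=3k+1$, and for $k\ge m$ it equals $Z$.

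\emph{Case $k\le t$.} Lemma~\ref{lem:component}\ref{c:b} gives $\mu(F_i)\le3+3\eps$ for each component, and summing gives the claim.

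\emph{Case $k\ge t$.} Lemma~\ref{lem:component}\ref{c:a} gives $\mu(F_i)\le3+\mu_0(F_i)$. The components are nonnegative and sum to $M$, so $\sum_i\mu_0(F_i)\le\mu_0(M)=N\eps=1$. Summing therefore gives $\sum_{i\le k}\mu(F_i)\le 3k+1$. When $k\ge m$, the trivial bound $\sum_i\mu(F_i)\le\mu(M)=Z$ suffices, since $Z\qs$ has only $m$ nonzero entries and its partial sums are then all equal to $Z$. (In fact $3k+1\ge Z$ there, so either bound works.)

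Together these cases give every partial-sum inequality, hence $q\preceq\qs$ in the sense of Definition~\ref{def:majorization}. The two regimes switch exactly at $k=t$, where both bounds equal $3t+1=N+1$, so the inequalities fit together with no loss. The only point needing care is the padding convention: the decomposition may have fewer or more than $m$ components. This is handled by letting $q_i=0$ beyond the number of components and noting that $Z\qs$ has sum $Z$, so every partial sum of $q$ is at most $1$.

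I expect no real obstacle, since the substantive work, including the use of \ref{R2} and \ref{R3}, is already in Lemma~\ref{lem:component}.
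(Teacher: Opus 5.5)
Your proof is correct and is essentially the paper's argument: the same three bounds ($k(3+3\eps)$ from Lemma~\ref{lem:component}\ref{c:b}, $3k+1$ from Lemma~\ref{lem:component}\ref{c:a} together with $\mu_0(M)=1$, and the trivial bound $Z$) are matched to the three ranges of the partial sums of $\qs$. One small slip: for $k>m$ the bound $3k+1$ strictly exceeds $Z$, so only the trivial bound works there, not ``either bound''; the parenthetical is unnecessary and does not affect the argument.
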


\begin{proof}
Let $F_1,F_2,\dots$ be the components, ordered as above, so that $Zq_i=\mu(F_i)$. Fix $k\ge1$; in
the following sums, $i$ ranges over the indices $i\le k$ of existing components. By
Lemma~\ref{lem:component}\ref{c:b}, $\sum_i\mu(F_i)\le k(3+3\eps)$. By
Lemma~\ref{lem:component}\ref{c:a}, $\sum_i\mu(F_i)\le3k+\sum_i\mu_0(F_i)\le3k+1$, because the
extra-column masses of all components add up to $\mu_0(M)=1$. Trivially, $\sum_i\mu(F_i)\le Z$.
Therefore
\[
  q_1+\dots+q_k\le\min\Bigl\{ka,\ \frac{3k+1}{Z},\ 1\Bigr\}.
\]
On the other hand, the partial sums of $\qs$ are $ka$ for $k\le t$, then $ta+(k-t)b=(3k+1)/Z$ for
$t\le k\le m$, and $1$ for $k\ge m$. In each range of $k$, one of the three bounds above equals the
partial sum of $\qs$.
\end{proof}

\begin{lemma}\label{lem:cover}
If $(U,\cS)$ has an exact cover, then $M$ has a rank-one decomposition with $m$ components whose
normalized mass vector is $\qs$.
\end{lemma}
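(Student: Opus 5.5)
Let $I\subseteq[m]$ be an exact cover; since $\lvert U\rvert=3t$ and every set has three elements by \ref{R1}, $\lvert I\rvert=t$. I will build the decomposition explicitly, with one component per set column. For $j\in I$, let $F_j$ be the matrix that equals $M$ on $S_j\times\{0,j\}$ and is zero elsewhere. It has the form $g h^{\top}$ with $g=\ind_{S_j}$ and $h=\eps\,e_0+e_j$, where $e_0,e_j$ are the standard basis vectors of the column index set $\{0,\dots,m\}$. Indeed, $g(u)h(0)=\eps=M(u,0)$ and $g(u)h(j)=1=M(u,j)$ for $u\in S_j$. So $F_j$ is a nonnegative rank-one matrix of mass $3+3\eps$. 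For $j\notin I$, let $F_j$ be the matrix equal to $M$ in column $j$ and zero elsewhere, that is $F_j=\ind_{S_j}e_j^{\top}$, a nonnegative rank-one matrix of mass $3$.

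Next I check that $\sum_{j\in[m]}F_j=M$, entry by entry. Each set column $j$ of $M$ appears in exactly one component, namely $F_j$, and there it is reproduced exactly. For the extra column, fix a row $u$. Because $I$ is an exact cover, there is exactly one $j\in I$ with $u\in S_j$. Only that $F_j$ contributes to entry $(u,0)$, and it contributes $\eps=M(u,0)$. Components with $j\notin I$ vanish on column $0$. Thus the family $(F_j)_{j\in[m]}$ is a rank-one decomposition of $M$ with $m$ components.

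Finally, the masses are $3+3\eps$ for the $t$ indices in $I$ and $3$ for the $m-t$ others. Dividing by $Z$, and using $a=(3+3\eps)/Z$ and $b=3/Z$ from \eqref{eq:qstar}, the normalized mass vector listed in nonincreasing order is exactly $\qs$. There is no real obstacle here. The only point needing care is that the extra column is covered exactly once, and the exact-cover property, that every element lies in exactly one chosen set, gives precisely this.
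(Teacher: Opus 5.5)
Your proposal is correct and follows the paper's proof: the same components $\ind_{S_j}(e_j+\eps e_0)^{\top}$ for $j\in I$ and $\ind_{S_j}e_j^{\top}$ for $j\notin I$. Like the paper, you get the sum equal to $M$ by noting that the chosen sets partition $U$, so each entry of the extra column is covered exactly once, and you read off the masses $Za$ and $Zb$ directly.
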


\begin{proof}
Let $I\subseteq[m]$ be an exact cover, so $\lvert I\rvert=t$. For $j\in[m]$, let
$\ind_{S_j}\in\R^U$ be the indicator vector of $S_j$, let $e_0,\dots,e_m$ be the standard basis
vectors of $\R^{\{0,\dots,m\}}$, and put
\[
  E_j=\ind_{S_j}(e_j+\eps e_0)^{\top}\quad(j\in I),\qquad E_j=\ind_{S_j}e_j^{\top}\quad(j\notin I).
\]
Each $E_j$ is a nonnegative matrix of rank one. The matrices $E_j$ reproduce every set column of
$M$, and their extra columns add up to $\eps\sum_{j\in I}\ind_{S_j}=\eps\ind_U$, because the sets
$S_j$, $j\in I$, partition $U$. Thus $\sum_jE_j=M$. Finally, $\mu(E_j)=3+3\eps=Za$ for $j\in I$
and $\mu(E_j)=3=Zb$ for $j\notin I$; listed in nonincreasing order, the normalized masses form
$\qs$.
\end{proof}

In probabilistic terms, the corresponding auxiliary takes the value $j\in[m]$ with probability $a$
if $j\in I$ and $b$ otherwise; given $W=j$, the variable $X$ is uniform on $S_j$ and, independently,
$Y=j$, except that for $j\in I$ the value $Y=0$ is taken with probability $1/(N+1)$.

\subsection{Decompositions without an exact cover}

For a probability vector $q$, let
\begin{equation}\label{eq:Dt}
  D_t(q)=ta-(q_1+\dots+q_t)
\end{equation}
be the amount by which its $t$ largest entries fall short of those of $\qs$. By
Lemma~\ref{lem:majorization}, $D_t(q)\ge0$ for the normalized mass vector $q$ of every rank-one
decomposition of $M$, and the decomposition of Lemma~\ref{lem:cover} has $D_t(q)=0$. The next lemma
shows that, in the absence of an exact cover, $D_t(q)$ is bounded away from zero.

\begin{lemma}\label{lem:deficit}
If $(U,\cS)$ has no exact cover, then $D_t(q)\ge1/(NZ)$ for the normalized mass vector $q$ of every
rank-one decomposition of $M$.
\end{lemma}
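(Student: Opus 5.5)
We argue by contraposition: assuming that some rank-one decomposition of $M$ has $D_t(q)<1/(NZ)$, we construct an exact cover. Multiplying by $Z$, the hypothesis says that the $t$ heaviest components $F_1,\dots,F_t$ satisfy
\[
\sum_{i\le t}\mu(F_i)>t(3+3\eps)-\eps = 1+3t-\eps.
\]
If fewer than $t$ components exist, the sum is at most $Z\cdot\frac{3k+1}{Z}$ with $k<t$, far too small; so we may assume there are at least $t$ components.

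\emph{Step 1: each heavy component lives on a single set.} By Lemma~\ref{lem:component}\ref{c:b} every $\mu(F_i)\le3+3\eps$. Hence, for each $i\le t$,
\[
\mu(F_i) > 1+3t-\eps-(t-1)(3+3\eps) = 3+2\eps,
\]
using $3t\eps=1$. Since $3+2\eps>3+\eps$, Lemma~\ref{lem:component}\ref{c:c} gives an index $j_i$ with $\supp F_i\subseteq S_{j_i}\times\{0,j_i\}$. In particular $F_i$ has at most $3\eps$ mass in the extra column, and its extra-column entries lie only in the rows of $S_{j_i}$.

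\emph{Step 2: the indices are distinct, and their sets are disjoint.} Two components with $j_i=j_{i'}$ would together put more than $2(3+2\eps)-6\eps=6-2\eps>3$ units in column $j_i$, since each takes at most $3\eps$ from the extra column. Column $j_i$ has only $3$ units, so this is impossible, and the $j_i$ are distinct. Next, by Lemma~\ref{lem:component}\ref{c:a}, $\mu_0(F_i)\ge\mu(F_i)-3$, so
\[
\sum_{i\le t}\mu_0(F_i)>1-\eps.
\]
Each $F_i$ draws its extra-column mass from the entries $M(u,0)=\eps$ with $u\in S_{j_i}$. For each row $u$, the total extra-column mass drawn over all $i$ is at most $\eps$. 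Therefore $\sum_i\mu_0(F_i)\le\eps\,\bigl|\bigcup_i S_{j_i}\bigr|$. Combined with the previous bound, this gives $\bigl|\bigcup_i S_{j_i}\bigr|>N-1$, so the union is all of $U$.

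\emph{Step 3: conclude.} The union of the $t$ distinct sets $S_{j_i}$, each of size $3$, covers $U$, and $|U|=3t$. A counting argument then forces the sets to be pairwise disjoint, so $\{j_1,\dots,j_t\}$ is an exact cover. This contradicts the hypothesis, which proves the lemma.

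The main obstacle is Step 1, which must extract the per-component bound from the aggregate one. The margin $\eps$ is exactly what keeps each component above the threshold $3+\eps$ of Lemma~\ref{lem:component}\ref{c:c}, so the arithmetic has to be kept strict. The row-wise accounting of extra-column mass in Step 2 is the other point where care is needed.
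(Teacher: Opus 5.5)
Your proof is correct and follows essentially the same route as the paper: you derive the per-component bound $\mu(F_i)>3+2\eps$, apply Lemma~\ref{lem:component}\ref{c:c} to place each heavy component on a single set, and then bound the extra-column mass from both sides to force $\bigcup_i S_{j_i}=U$. Your separate column-mass argument for the distinctness of the $j_i$ is valid but redundant, because the final counting step already forces the $t$ sets to be pairwise disjoint, and hence distinct; the paper obtains distinctness this way.
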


\begin{proof}
Let $F_1,F_2,\dots$ be the components of a rank-one decomposition of $M$, ordered as above, and let
$q$ be its normalized mass vector, so that $\mu(F_i)=Zq_i$. Suppose that $Z\,D_t(q)<\eps$; we show
that an exact cover exists.

By Lemma~\ref{lem:component}\ref{c:b}, each of the numbers $Zq_1,\dots,Zq_t$ is at most $3+3\eps$,
and together they fall short of $t(3+3\eps)$ by $Z\,D_t(q)<\eps$. Hence each of them exceeds $3+2\eps$;
in particular, the components $F_1,\dots,F_t$ exist, and Lemma~\ref{lem:component}\ref{c:c} yields
indices $j_1,\dots,j_t\in[m]$ with $\supp F_i\subseteq S_{j_i}\times\{0,j_i\}$. Put
$V=S_{j_1}\cup\dots\cup S_{j_t}$, and compare two bounds on the extra-column mass of
$F_1,\dots,F_t$. By Lemma~\ref{lem:component}\ref{c:a},
\[
  \sum_{i=1}^{t}\mu_0(F_i)\ge\sum_{i=1}^{t}\bigl(\mu(F_i)-3\bigr)=t(3+3\eps)-Z\,D_t(q)-3t
  =1-Z\,D_t(q)>1-\eps.
\]
On the other hand, $F_i(u,0)=0$ for $u\notin V$, and $\sum_{i\le t}F_i(u,0)\le M(u,0)=\eps$ for every
$u$, because the components are nonnegative and sum to $M$. Hence
$\sum_{i\le t}\mu_0(F_i)\le\eps\lvert V\rvert$.

Combining the two bounds gives $\lvert V\rvert>(1-\eps)/\eps=N-1$, so $V=U$. Thus $U$ is the union
of the $t$ sets $S_{j_1},\dots,S_{j_t}$ of size three, while $\lvert U\rvert=3t$. These sets are
therefore pairwise disjoint (in particular, the indices $j_i$ are distinct), and
$\{j_1,\dots,j_t\}$ is an exact cover.
\end{proof}

\section{The entropy gap}\label{sec:entropy-gap}

We now turn the deficit of Lemma~\ref{lem:deficit} into an entropy gap. The tool is the following
quantitative form of the Schur concavity of entropy, for vectors majorized by a vector with two
distinct positive entries: a deficit at the position $t$, where $\qs$ drops from $a$ to $b$, forces
an explicit increase of the entropy.

\begin{lemma}\label{lem:entropy}
Let $1\le t<m$ be integers and $a>b>0$ real numbers with $ta+(m-t)b=1$, and let $\qs$ be the
probability vector with $t$ entries equal to $a$ and $m-t$ entries equal to $b$. Let $q$ be a
probability vector with $q\preceq\qs$.
\begin{enumerate}[label=\textup{(\alph*)}]
\item $H(q)\ge H(\qs)$.
\item If $D_t(q)=ta-(q_1+\dots+q_t)\ge\delta$ for some $\delta$ with $0<\delta\le(a-b)/2$, then
\begin{equation}\label{eq:entropy-gap}
  H(q)-H(\qs)\ge\delta\log\frac{a-\delta}{b+\delta}.
\end{equation}
\end{enumerate}
\end{lemma}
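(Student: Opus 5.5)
Part~(a) is the Schur concavity of entropy recalled in Section~\ref{subsec:majorization}, applied to $q\preceq\qs$. For part~(b), I would not compare $q$ with $\qs$ directly. Instead I would insert an explicit intermediate vector $q'$ with $q\preceq q'$, and compute $H(q')-H(\qs)$ by hand. The shape of the bound $\delta\log\frac{a-\delta}{b+\delta}$ suggests moving mass $\delta$ from one entry equal to $a$ to one entry equal to $b$. So I would take $q'$ to be $\qs$ with one entry $a$ replaced by $a-\delta$ and one entry $b$ replaced by $b+\delta$. Such an entry $b$ exists because $t<m$.

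First, I would check that $q'$ is a probability vector whose nonincreasing listing is
\[
(\underbrace{a,\dots,a}_{t-1},\,a-\delta,\,b+\delta,\,\underbrace{b,\dots,b}_{m-t-1}).
\]
The only ordering condition that needs checking is $a-\delta\ge b+\delta$, which is exactly $\delta\le(a-b)/2$. Its partial sums are then the following.
\begin{itemize}
\item For $k\le t-1$ they equal $ka$, the same as for $\qs$.
\item For $k=t$ the partial sum is $ta-\delta$.
\item For $k\ge t+1$ the two changes cancel, so the partial sums coincide with those of $\qs$.
\end{itemize}
Hence $q\preceq q'$ follows from two facts. For every $k\ne t$, it follows from $q\preceq\qs$. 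For $k=t$, it follows from the hypothesis $q_1+\dots+q_t\le ta-\delta$. Schur concavity then gives $H(q)\ge H(q')$.

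It remains to bound $H(q')-H(\qs)$. Let $\phi(x)=-x\log x$. Since only two entries differ,
\[
H(q')-H(\qs)=\bigl[\phi(a-\delta)-\phi(a)\bigr]+\bigl[\phi(b+\delta)-\phi(b)\bigr]
=\int_0^\delta\bigl(\phi'(b+s)-\phi'(a-s)\bigr)\,ds .
\]
Here $\phi'(x)=-\log x-1/\ln2$, so the integrand is $\log\frac{a-s}{b+s}$. This is decreasing in $s$ and positive on $[0,\delta]$, because $a-s\ge a-\delta\ge b+\delta\ge b+s$ there. The integral is therefore at least $\delta\log\frac{a-\delta}{b+\delta}$, which is \eqref{eq:entropy-gap}.

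I expect the only delicate point to be the bookkeeping of the sorted order of $q'$, together with the verification at $k=t$. This is where both hypotheses enter: $\delta\le(a-b)/2$ and the deficit $D_t(q)\ge\delta$. The entropy computation itself is a routine one-variable estimate.
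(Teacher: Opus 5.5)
Your proposal is correct and matches the paper's proof essentially step for step: you use the same intermediate vector (your $q'$ is the paper's $q^\circ$), the same partial-sum check at $k=t$ to get $q\preceq q'$, and Schur concavity. The only cosmetic difference is that you bound $H(q')-H(\qs)$ by integrating the decreasing function $\log\frac{a-s}{b+s}$ over $[0,\delta]$, whereas the paper applies the mean value theorem to $g(x)=x\log(1/x)$, whose derivative is decreasing; the two arguments are equivalent.
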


\begin{proof}
Part~(a) is the Schur concavity of entropy recalled in Section~\ref{subsec:majorization}.

For part~(b), let $q^\circ$ be the probability vector obtained from $\qs$ by moving mass $\delta$
from one entry equal to $a$ to one entry equal to $b$:
\[
  q^\circ=(\underbrace{a,\dots,a}_{t-1},\,a-\delta,\,b+\delta,\,\underbrace{b,\dots,b}_{m-t-1}).
\]
Since $\delta\le(a-b)/2$, we have $a\ge a-\delta\ge b+\delta\ge b$, so the entries of $q^\circ$
are listed in nonincreasing order. Its partial sums are $ka$ for $k\le t-1$, $ta-\delta$ for
$k=t$, and those of $\qs$ for $k\ge t+1$. Since $q\preceq\qs$, the partial sums of $q$ are at most
$ka$ for $k\le t-1$ and at most those of $\qs$ for $k\ge t+1$; moreover,
$q_1+\dots+q_t=ta-D_t(q)\le ta-\delta$. Hence $q\preceq q^\circ$, and Schur concavity gives
$H(q)\ge H(q^\circ)$.

It remains to compare $q^\circ$ with $\qs$. Let $g(x)=x\log(1/x)$ for $x>0$. Then
\[
  H(q^\circ)-H(\qs)=\bigl[g(b+\delta)-g(b)\bigr]-\bigl[g(a)-g(a-\delta)\bigr].
\]
The derivative $g'(x)=\log(1/x)-\log e$ is decreasing. By the mean value theorem,
$g(b+\delta)-g(b)\ge\delta\,g'(b+\delta)$ and $g(a)-g(a-\delta)\le\delta\,g'(a-\delta)$. Therefore
\[
  H(q)-H(\qs)\ \ge\ H(q^\circ)-H(\qs)\ \ge\ \delta\bigl(g'(b+\delta)-g'(a-\delta)\bigr)
  =\delta\log\frac{a-\delta}{b+\delta}.\qedhere
\]
\end{proof}

\begin{theorem}\label{thm:gap}
Let $(U,\cS)$ be restricted, let $p$ and $h_0$ be given by \eqref{eq:M} and \eqref{eq:h0}, and put
$\Delta=1/(4N^2Z)$.
\begin{enumerate}[label=\textup{(\alph*)}]
\item If $(U,\cS)$ has an exact cover, then $G(p)=h_0$, and the infimum in \eqref{eq:G} is attained
by an auxiliary with $m$ values and rational probabilities.
\item If $(U,\cS)$ has no exact cover, then $H(W)\ge h_0+\Delta$ for every admissible auxiliary $W$
for $p$; in particular, $G(p)\ge h_0+\Delta$.
\end{enumerate}
\end{theorem}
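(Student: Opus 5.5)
The plan is to combine Lemma~\ref{lem:decomp} with the majorization results of Section~\ref{sec:matrix} and the quantitative Schur concavity of Lemma~\ref{lem:entropy}. Every admissible auxiliary $W$ for $p$ gives, by Lemma~\ref{lem:decomp}(a), a rank-one decomposition $(F_w)$ of $p$ whose mass vector is the distribution of $W$. Multiplying by $Z$ turns it into a rank-one decomposition $(ZF_w)$ of $M$, whose normalized mass vector $q$ is, after sorting, the distribution of $W$. Thus $H(W)=H(q)$, and by Lemma~\ref{lem:majorization}, $q\preceq\qs$.

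For part~(a), Lemma~\ref{lem:entropy}(a) gives $H(W)=H(q)\ge H(\qs)=h_0$ for every admissible $W$, so $G(p)\ge h_0$. Conversely, Lemma~\ref{lem:cover} supplies a decomposition of $M$ with $m$ components and normalized mass vector $\qs$. Dividing by $Z$ gives a decomposition of $p$. Lemma~\ref{lem:decomp}(b) then yields an admissible $W$ with $m$ values, distribution $\qs$ (rational entries $a,b$), and $H(W)=h_0$. So the infimum is attained and equals $h_0$.

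For part~(b), Lemma~\ref{lem:deficit} gives $D_t(q)\ge\delta:=1/(NZ)$. To apply Lemma~\ref{lem:entropy}(b) I must check $\delta\le(a-b)/2$. Here $a-b=3\eps/Z=3/(NZ)$, so $(a-b)/2=3/(2NZ)\ge\delta$, and the hypothesis holds. This gives
\[
H(W)-h_0\ge\frac1{NZ}\log\frac{a-\delta}{b+\delta}=\frac1{NZ}\log\frac{3+3/N-1/N}{3+1/N}=\frac1{NZ}\log\Bigl(1+\frac{1}{3N+1}\Bigr).
\]

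The main step is bounding this logarithm below by $1/(4N)$, which yields $\Delta=1/(4N^2Z)$. Use $\ln(1+x)\ge x/(1+x)$ with $x=1/(3N+1)$. This gives $\log(1+x)\ge \frac{1}{(3N+2)\ln2}\ge\frac1{3N+2}$, because $\ln 2<1$. Finally, $3N+2\le4N$ for $N\ge2$, and $N=3t\ge3$. Hence $H(W)\ge h_0+1/(4N^2Z)$ for every admissible $W$, and taking the infimum gives $G(p)\ge h_0+\Delta$.
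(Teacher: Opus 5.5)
Your proposal is correct and follows the paper's proof essentially step by step. You rescale the decomposition from Lemma~\ref{lem:decomp}(a) to one of $M$, then use Lemma~\ref{lem:majorization} with Lemma~\ref{lem:entropy}(a) and Lemma~\ref{lem:cover} for part~(a), and Lemma~\ref{lem:deficit} with Lemma~\ref{lem:entropy}(b) at $\delta=1/(NZ)$ and $\ln(1+x)\ge x/(1+x)$ for part~(b). The only difference is cosmetic, in the last numerical step: you drop the factor $1/\ln2$ and use $N=3t\ge3$ to get $3N+2\le4N$, while the paper keeps $\ln2<7/10$ to get $(3N+2)\ln2<4N$ for every $N\ge1$; both are valid.
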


\begin{proof}
Let $W$ be an admissible auxiliary for $p$. By Lemma~\ref{lem:decomp}(a), applied to $p$ and rescaled
by the factor $Z$, the positive probabilities $\Pr(W=w)$ form the normalized mass vector $q$ of a
rank-one decomposition of $M$; in particular, $H(W)=H(q)$. By Lemma~\ref{lem:majorization},
$q\preceq\qs$, so Lemma~\ref{lem:entropy} applies.

(a) By Lemma~\ref{lem:entropy}(a), $H(W)\ge h_0$ for every admissible auxiliary $W$. Conversely,
Lemma~\ref{lem:cover} and Lemma~\ref{lem:decomp}(b) provide an admissible auxiliary with distribution
$\qs$, whose entropy is $h_0$.

(b) Here $D_t(q)\ge1/(NZ)$ by Lemma~\ref{lem:deficit}. Put $\delta=1/(NZ)$. Since $\eps=1/N$, we
have $a=(3N+3)/(NZ)$ and $b=3N/(NZ)$; hence $\delta\le(a-b)/2$, $a-\delta=(3N+2)/(NZ)$ and
$b+\delta=(3N+1)/(NZ)$. Lemma~\ref{lem:entropy}(b) and the inequality $\ln(1+x)\ge x/(1+x)$, with
$x=1/(3N+1)$, give
\[
  H(W)-h_0\ \ge\ \frac1{NZ}\log\frac{3N+2}{3N+1}\ \ge\ \frac{1}{(3N+2)\ln2\cdot NZ}
  \ >\ \frac{1}{4N^2Z}=\Delta,
\]
where the last inequality holds because $(3N+2)\ln2<\frac{7}{10}(3N+2)\le\frac72N<4N$ for $N\ge1$.
This holds for every admissible auxiliary $W$, and so $G(p)\ge h_0+\Delta$.
\end{proof}

\section{Proof of Theorem~\ref{thm:G-main}}\label{sec:G-proof}

We reduce \CX, which is $\NP$-hard by Theorem~\ref{thm:mr}, to \CE. An instance with $n=0$ sets
has the empty exact cover and is mapped to the yes-instance $\bigl((1),0\bigr)$; note that
$G\bigl((1)\bigr)=0$. Given an instance with $n\ge1$ sets, the reduction proceeds as follows.
\begin{enumerate}[label=\textup{(\roman*)},leftmargin=*]
\item Construct the restricted pair $(U,\cS)$ of Lemma~\ref{lem:restricted}, with $N=21n$, $t=7n$
and $m=15n$, and the matrix $p$ of \eqref{eq:M}; thus $Z=45n+1$.
\item Let $s=4N^2Z$, so that $\Delta=1/s$, and let $k$ be the least nonnegative integer such that
\begin{equation}\label{eq:threshold}
  2^{Zk}\,3^{Zs}\,(N+1)^{(N+1)s}\ \ge\ Z^{Zs}\,N^{(N+1)s}.
\end{equation}
\item Output $(p,r)$ with $r=k/s$.
\end{enumerate}

\emph{Correctness.} By \eqref{eq:h0}, $2^{Zh_0}=(Z/3)^Z\bigl(N/(N+1)\bigr)^{N+1}$. Hence, for an
integer $k\ge0$, inequality \eqref{eq:threshold} states that $2^{Zk}\ge2^{sZh_0}$, that is,
$k\ge sh_0$. Therefore $k=\lceil sh_0\rceil$, and
\[
  h_0\le r<h_0+\frac1s=h_0+\Delta.
\]
If the given instance has an exact cover, then so does $(U,\cS)$ by Lemma~\ref{lem:restricted}, and
Theorem~\ref{thm:gap}(a) gives $G(p)=h_0\le r$. Otherwise $(U,\cS)$ has no exact cover, and
Theorem~\ref{thm:gap}(b) gives $G(p)\ge h_0+\Delta>r$. Hence $G(p)\le r$ if and only if the given
instance has an exact cover.

\emph{Running time.} Step (i) takes polynomial time by Lemma~\ref{lem:restricted}, and $p$ has
$N(m+1)=O(n^2)$ entries, each equal to $0$, $1/Z$ or $1/(NZ)$. Since $h_0=H(\qs)\le\log m<\log Z$,
the integer $k$ lies between $0$ and $s\lceil\log Z\rceil$. As the left-hand side of
\eqref{eq:threshold} increases with $k$, the integer $k$ can be found by binary search, with
$O(\log(s\log Z))=O(\log n)$ evaluations of \eqref{eq:threshold}. Each evaluation
compares two integers with $O(sZ\log Z)=O(n^4\log n)$ binary digits, which can be computed by
repeated multiplication in polynomial time. Finally, $r=k/s$ has $O(\log n)$ bits. Thus the
reduction runs in polynomial time.

\emph{Restricted instances.} By \eqref{eq:M} and \ref{R2}, every row of $p$ has at most four nonzero
entries, and every nonzero entry equals $1/Z$ or $1/(NZ)$; the $1\times1$ matrix $(1)$ trivially
satisfies both restrictions. This completes the proof of Theorem~\ref{thm:G-main}.
\qed

\begin{remark}
Exact integer arithmetic avoids any error analysis, at the price of large (but polynomial-size)
integers. Alternatively, $r=\hat h+\Delta/2$ works for any rational $\hat h$ within $\Delta/4$ of
$h_0$, and such an $\hat h$ can be computed from rational approximations of $\log3$, $\log N$,
$\log(N+1)$ and $\log Z$ with $O(\log n)$ correct bits.
\end{remark}

The gap of Theorem~\ref{thm:gap} also yields hardness of numerical
approximation, in parallel with Theorem~\ref{thm:approximation} for Wyner's
common information.

\begin{corollary}[Hardness of approximating the common entropy]\label{cor:G-approx}
Producing, for a rational probability table $p$ with $L$ entries, a rational
number $q$ with
\[
|q-G(p)|\le\frac1{36L^2}
\]
is $\NP$-hard under a polynomial-time reduction making one call to a solver for
this task. This remains true for the restricted instances of
Theorem~\ref{thm:G-main}.
\end{corollary}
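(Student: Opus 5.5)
The plan is to reuse the reduction of Section~\ref{sec:G-proof} unchanged and convert the gap of Theorem~\ref{thm:gap} into a decision rule from one approximate value, exactly as in the proof of Theorem~\ref{thm:approximation}. Given an instance of \CX\ with $n\ge1$ sets, we construct $(U,\cS)$ by Lemma~\ref{lem:restricted} and the table $p$ of \eqref{eq:M}. The table has $L=N(m+1)$ entries, with $N=21n$, $m=15n$ and $Z=45n+1$. We then compute, by the same exact integer arithmetic, the rational $r=k/s$ with $h_0\le r<h_0+\Delta$. The instance with $n=0$ is mapped to the table $(1)$, for which $G=0$ and any constant output rule works.

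The key quantitative step is to check that the allowed error is less than half the gap, i.e. $1/(36L^2)<\Delta/2=1/(8N^2Z)$. This is equivalent to $8N^2Z<36L^2=36N^2(m+1)^2$. Since $Z=3m+1\le 3(m+1)<\tfrac{9}{2}(m+1)^2$, the inequality holds with room to spare, and in fact for every restricted pair, not only those of Lemma~\ref{lem:restricted}.

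With this in hand, call the solver to obtain $q$ with $|q-G(p)|\le 1/(36L^2)=:\eta$. Set the threshold $\theta=h_0+\Delta/2$. We cannot compute $\theta$ exactly, so we use a rational substitute: either $\hat h+\Delta/2$ with $|\hat h-h_0|\le\Delta/8$, computed as in the remark after Section~\ref{sec:G-proof}, or a second integer-arithmetic computation of $\lceil 2s h_0\rceil/(2s)$ in the manner of \eqref{eq:threshold}. The latter gives a rational $\theta'$ with $h_0\le\theta'<h_0+\Delta/2$, and using half-integers of $1/s$ we can arrange $h_0+\Delta/4\le\theta''\le h_0+3\Delta/4$. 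Concretely, take $\theta''=\lceil 4s h_0\rceil/(4s)+1/(4s)$, which lies in $[h_0+\Delta/4,\,h_0+\Delta/2)$.

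On yes-instances, $G(p)=h_0$ by Theorem~\ref{thm:gap}(a), so $q\le h_0+\eta<h_0+\Delta/4\le\theta''$. This needs $\eta<\Delta/4$, i.e. $4N^2Z<36L^2$, which holds by the same estimate. On no-instances, Theorem~\ref{thm:gap}(b) gives $q\ge h_0+\Delta-\eta>h_0+3\Delta/4>\theta''$. We answer yes iff $q<\theta''$, comparing rationals by cross-multiplication. The restricted-instance clause follows because $p$ is exactly the table of Theorem~\ref{thm:G-main}.

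The main obstacle is only bookkeeping: the threshold must be a rational computed in polynomial time that sits strictly inside the gap with margin larger than $\eta$. The binary search on $k$ with $4s$ in place of $s$ handles this, with integers of polynomially many digits, as in the running-time analysis of Section~\ref{sec:G-proof}.
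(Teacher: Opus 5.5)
Your proof is correct and follows the paper's own argument: the paper also shows that the allowed error is a fraction of $\Delta$ (it uses $4N^2Z\le 12L^2$ to get error at most $\Delta/3$), computes $\lceil s'h_0\rceil$ with $s'=3s$ by the same integer binary search on \eqref{eq:threshold}, and puts the rational threshold $k'/s'+\Delta/3$ inside the gap. Your use of $4s$ and of the threshold $\lceil 4sh_0\rceil/(4s)+1/(4s)$ with a strict comparison is only a cosmetic variant of this.
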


\begin{proof}
Instances of \CX\ with $n=0$ sets are yes-instances and are decided directly.
Given an instance with $n\ge1$ sets, construct $p$ as in step~(i) above; it has
$L=N(m+1)$ entries. Since $Z=3m+1\le3(m+1)$ and $N\le L$, we have
$4N^2Z\le12NL\le12L^2$, so every allowed output $q$ satisfies
$|q-G(p)|\le\Delta/3$. Let $s'=3s$, and let $k'$ be the least nonnegative
integer satisfying~\eqref{eq:threshold} with $k$ and $s$ replaced by $k'$ and
$s'$. As in the proof above, $k'=\lceil s'h_0\rceil$, so that
$h_0\le k'/s'<h_0+\Delta/3$, and $k'$ is computed in polynomial time by the same
binary search. Put $\theta=k'/s'+\Delta/3$, a rational number.

If the given instance has an exact cover, then $G(p)=h_0$ by
Theorem~\ref{thm:gap}(a), and $q\le h_0+\Delta/3\le\theta$. Otherwise
$G(p)\ge h_0+\Delta$ by Theorem~\ref{thm:gap}(b), and
$q\ge h_0+2\Delta/3>k'/s'+\Delta/3=\theta$. Hence the given instance has an
exact cover if and only if $q\le\theta$, and the comparison takes polynomial
time.
\end{proof}

\section{Comparison of the two results and open problems}\label{sec:comparison}

The two hardness proofs rest on the same correspondence between admissible
auxiliaries and nonnegative rank-one decompositions (Lemma~\ref{lem:decomp}),
but on different combinatorial phenomena. Wyner's common information decreases
as soon as some component can spread its mass uniformly over a rectangle that
is large compared with the rows of $p$; the common entropy decreases only when
the mass of $p$ can be concentrated on few heavy components. The following
example shows that the first phenomenon does not imply the second, even on the
tables of Theorem~\ref{thm:J-main}.

\begin{proposition}\label{prop:J-vs-G}
Let $\mathcal X=\{x_1,x_2\}$ and $\mathcal Y=\{y_1,y_2,y_3,y_4\}$, and let $p$
be the uniform table on
\[
E=\{x_1\}\times\{y_1,y_2,y_3\}\ \cup\ \{x_2\}\times\{y_1,y_2,y_4\}.
\]
Then $p$ satisfies the restrictions of Theorem~\ref{thm:J-main} with $n=2$,
$d=3$ and $r=1$, and
\[
J(p)\;<\;1\;=\;G(p).
\]
\end{proposition}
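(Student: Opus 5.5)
The proof splits into three parts: checking the restrictions, proving $J(p)<1$, and proving $G(p)=1$.

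\emph{Restrictions.} The table has $n=2=2^1$ rows, so $r=1$. Each row has exactly $d=3$ positive entries, each equal to $1/6=1/(nd)$. Columns $y_1,y_2$ have two positive entries and $y_3,y_4$ have one, all at most $d$.

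\emph{The bound $J(p)<1$.} The support contains the rectangle $\{x_1,x_2\}\times\{y_1,y_2\}$, of size $4>3=d$. Corollary~\ref{cor:criterion} (or Lemma~\ref{lem:large-rectangle}) then gives $J(p)<\log n=1$.

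\emph{The bound $G(p)\le1$.} This follows from \eqref{eq:J-le-G}, since $G(p)\le H(X)=1$.

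\emph{The bound $G(p)\ge1$, which is the main step.} By Lemma~\ref{lem:decomp}(a), it suffices to show that every rank-one decomposition $(F_w)$ of $p$ has mass vector with entropy at least $1$. Each component is supported on a rectangle inside $E$. Every such rectangle is of one of two kinds:
\begin{itemize}
\item contained in a single row $\{x_i\}\times C$;
\item of the form $\{x_1,x_2\}\times B$ with $B\subseteq\{y_1,y_2\}$.
\end{itemize}

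The entries $(x_1,y_3)$ and $(x_2,y_4)$ can only be covered by components of the first kind. I will bound the mass of each component:
\begin{itemize}
\item A component of the first kind has mass at most $\mu(\text{row})=1/2$.
\item A component of the second kind has mass at most $4/6=2/3$.
\end{itemize}
So every component has mass at most $2/3$. That alone gives only $H\ge H(2/3,1/3)<1$, so I need a finer argument.

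\emph{Finer argument.} Let $\beta$ be the total mass carried by second-kind components. Every other component lies in a single row. Row $x_1$ has mass $1/2$, and at most $\beta/2$ of it is taken by second-kind components (their mass splits equally between the rows). So the first-kind components in row $x_1$ carry mass at least $1/2-\beta/2$ in total, and similarly for row $x_2$.

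Let $c$ be the largest single component mass. My plan is to show that either every component has mass $\le1/2$, or $H\ge1$ directly. The key inequality is
\[
H(q)\ge -\log\max_i q_i,
\]
which gives $H\ge1$ whenever every mass is at most $1/2$. The difficult case is a second-kind component $F$ with $\mu(F)>1/2$.

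In that case, $F$ is rank one on $\{x_1,x_2\}\times\{y_1,y_2\}$ and dominated by $p$, which is uniform $1/6$ there. Being rank one, it has the form $g h^{\top}$. Its mass exceeds $1/2$, so it uses at least $3$ of the $4$ cells with large values, and its support must be all four cells.

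The residual $p-F$ is then tiny on the $2\times2$ block but contains full mass $1/6$ at $(x_1,y_3)$ and $(x_2,y_4)$. These two entries must go to at least two components, each carrying mass at least something. Then $q$ has one entry $c\in(1/2,2/3]$, and the remaining mass $1-c$ is split among at least two components. I expect the worst case to be a partial split of the remainder, $q=(c,(1-c)/2,(1-c)/2)$, with entropy
\[
h(c)+(1-c)\ge1 \quad\text{for } c\le 2/3.
\]
Indeed $h(2/3)+1/3\approx1.25$. For finer splits I will use grouping: $H(q)\ge h(c)+(1-c)\cdot1$, provided the remaining mass is split into at least two parts with each at most half of $1-c$.

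\emph{Expected obstacle.} Guaranteeing this condition is the hardest step. The two forced row-components need not be balanced. I would handle it by bounding the mass of any other component by the residual structure, and if necessary fall back on the concavity bound $H(q)\ge\min$ over extreme points of the constrained mass polytope, checking the finitely many vertices numerically.

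Finally, $G(p)=1$ is attained by $W=X$.
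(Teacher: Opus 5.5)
\textbf{There is a genuine gap: the main step, $G(p)\ge1$, is not proved in the case of a heavy component.}

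Your checks of the restrictions, of $J(p)<1$ and of $G(p)\le1$ are correct. So is the easy half of your case split: a component of mass above $1/2$ must be of the second kind, and if every mass is at most $1/2$ then $H(q)\ge-\log\max_i q_i\ge1$.

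The problem is the remaining case, which carries all the content of $G(p)\ge1$ and which you leave as an expectation.
\begin{itemize}
\item Your guess that $(c,(1-c)/2,(1-c)/2)$ is the worst case points the wrong way. Entropy is concave, so under linear constraints it is minimized at the most \emph{unbalanced} admissible configurations, not at the balanced one.
\item The sufficient condition you aim for (the remainder split into parts each at most half of $1-c$) is not guaranteed, as you notice yourself. The row-$x_1$ and row-$x_2$ components need not carry equal mass.
\item The fallback ``minimum over the vertices of the constrained mass polytope'' is not yet an argument. The mass vector of a decomposition has unboundedly many entries, and you do not say which finite-dimensional vector and which constraints you mean.
\item Separately, your claim that second-kind components take at most $\beta/2$ of each row is false. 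The rank-one component with row weights $(1/6,1/12)$ and unit column weights on $y_1,y_2$ is dominated by $p$ and puts mass $1/3$ in row $x_1$ but $1/6$ in row $x_2$. You never use this claim, so simply drop it.
\end{itemize}

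The missing idea is to coarse-grain $W$ into a bounded number of classes according to the type of its component, using $H(W)\ge H(f(W))$.

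The paper takes $f(w)=A_w$, the row set of $\supp F_w$. This gives a three-point distribution $(\tau_1,\tau_2,1-\tau_1-\tau_2)$. The cells $(x_1,y_3)$ and $(x_2,y_4)$ force $\tau_1,\tau_2\ge1/6$, and the row masses force $\tau_1,\tau_2\le1/2$. Concavity on the square $[1/6,1/2]^2$ then leaves four vertices to check, and the minimum value $1$ is attained at $(1/2,1/2)$.

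The same observation also closes your own case split directly. Suppose $\mu(F)=c\in(1/2,2/3]$.
\begin{itemize}
\item The other components include the first-kind components of row $x_1$ and those of row $x_2$. These are two disjoint groups, each of mass at least $1/6$.
\item Hence every other component has mass at most $1-c-1/6$. Since $1-c<1/2$, this is less than a fraction $2/3$ of the remainder $1-c$.
\item With $h$ the binary entropy, the grouping rule gives $H(q)\ge h(c)+(1-c)\log\tfrac32\ge h(2/3)+\tfrac13\log\tfrac32=\tfrac43\log3-1$.
\item Finally, $\tfrac43\log3-1>1$ because $\log3>\tfrac32$.
\end{itemize}
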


\begin{proof}
The table has $n=2=2^1$ rows, three positive entries in each row, all equal to
$1/6=1/(nd)$, and at most two positive entries in each column. The rectangle
$\{x_1,x_2\}\times\{y_1,y_2\}\subseteq E$ has $4>d$ entries, so
Lemma~\ref{lem:large-rectangle} gives $J(p)<\log2=1$. The auxiliary $W=X$ gives
$G(p)\le H(X)=1$.

Conversely, let $W$ be an admissible auxiliary, and let $(F_w)$ be the rank-one
decomposition of $p$ given by Lemma~\ref{lem:decomp}(a). By~\eqref{eq:rectangle},
$\supp F_w=A_w\times B_w\subseteq E$ for nonempty sets $A_w\subseteq\mathcal X$
and $B_w\subseteq\mathcal Y$. For $i=1,2,3$, let $\tau_i$ be the probability that
$A_W$ equals $\{x_1\}$, $\{x_2\}$ and $\{x_1,x_2\}$, respectively. Since
$(x_2,y_3)\notin E$, a component whose support contains $(x_1,y_3)$ has
$A_w=\{x_1\}$; hence
$1/6=p(x_1,y_3)\le\sum_{w:\,A_w=\{x_1\}}\mu(F_w)=\tau_1$. On the other hand, the
components with $A_w=\{x_1\}$ are supported in the row of $x_1$, so
$\tau_1\le\Pr(X=x_1)=1/2$. Symmetrically, using $(x_2,y_4)$, we have
$1/6\le\tau_2\le1/2$. Since $(\tau_1,\tau_2,\tau_3)$ is the distribution of a
function of $W$,
\[
H(W)\;\ge\;f(\tau_1,\tau_2):=H(\tau_1,\tau_2,1-\tau_1-\tau_2).
\]
The function $f$ is concave on the square $[1/6,1/2]^2$, so its minimum there is
attained at a vertex. Now $f(1/2,1/2)=1$,
$f(1/2,1/6)=f(1/6,1/2)=\tfrac23+\tfrac12\log3>1$, and
$f(1/6,1/6)=\log3-\tfrac13>1$. Hence $H(W)\ge1$ for every admissible $W$, and
$G(p)=1$.
\end{proof}

In the language of the decision problems, $(p,1)$ is a yes-instance of the
problem~\eqref{eq:decision} for $J$, whereas $G(p)=1$: the large rectangle that
lowers $J$ leaves $G$ at the value of the trivial auxiliary $W=X$. This is why
the reduction of Section~\ref{sec:reduction} says nothing about $G$, and a
different reduction is needed.

We conclude with some questions left open by this work.
\begin{enumerate}
\item \emph{Membership in NP.} Theorem~\ref{thm:J-main} establishes
NP-completeness of the problem for $J$ on a restricted class of tables, thanks to
the combinatorial certificates of Corollary~\ref{cor:criterion}. We do not know
whether the problem for $J$ on general tables, or \CE, belongs to NP; for \CE\ we
assert NP-hardness only.
\item \emph{Asymptotic quantities.} The exact common information rate
$\lim_{n\to\infty}G(p^{\otimes n})/n$ of~\cite{KumarLiElGamal2014}, and the
amortized nonnegative rank of~\cite{BraunJainLeePokutta2017}, are defined by
limits over tensor powers, which have exponential size. Our results do not
address the complexity of computing them.
\end{enumerate}

\providecommand{\doi}[1]{\textsc{doi:}\,\texttt{#1}}

\end{document}